\newtheorem{problem}{Problem}
\newtheorem{theorem}{Theorem}
\newtheorem{lemma}{Lemma}
\newtheorem{remark}{Remark}
\newtheorem{definition}{Definition}
\newtheorem{algorithm}{Algorithm}
\newtheorem{proposition}{Proposition}
\begin{document}
\title{\LARGE \bf  Coding Schemes for Securing Cyber-Physical Systems Against Stealthy Data Injection Attacks}

\author{Fei Miao,~\IEEEmembership{Student Member,~IEEE,}
 Quanyan Zhu,~\IEEEmembership{Member,~IEEE,}
 Miroslav Pajic,~\IEEEmembership{Member,~IEEE,}
 and\\~George J. Pappas,~\IEEEmembership{Fellow,~IEEE}
\thanks{This material is based on research sponsored by DARPA under agreement number FA8750-12-2-0247.  The U.S. Government is authorized to reproduce and distribute reprints for Governmental purposes notwithstanding any copyright notation thereon.  The views and conclusions contained herein are those of the authors and should not be interpreted as necessarily representing the official policies or endorsements, either expressed or implied, of DARPA or the U.S. Government. Part of he results in this work appeared at the 53rd Conference on Decision and Control, Los Angeles, CA, USA, December 2014~\cite{code_cdc14}. }
\thanks{F.~Miao and G.~J.~Pappas are with the Department of Electrical and Systems Engineering, University of Pennsylvania, Philadelphia, PA, USA 19014.  Q.~Zhu is with the Department of Electrical and Computer Engineering, New York University, Brooklyn, NY, USA 11201. M.~Pajic is with the Department of Electrical and Computer Engineering, Duke University, Durham, NC, USA 27708. Email: \{\tt miaofei, pappasg\}@seas.upenn.edu,\{\tt quanyan.zhu\}@nyu.edu, \{\tt miroslav.pajic\}@duke.edu}.}
 
\maketitle

\begin{abstract}
\label{abstract}
This paper considers a method of coding the sensor outputs in order to detect stealthy false data injection attacks. An intelligent attacker can design a sequence of data injection to sensors and actuators that pass the state estimator and statistical fault detector, based on knowledge of the system parameters. To stay undetected, the injected data should increase the state estimation errors while keep the estimation residues small. We employ a coding matrix to change the original sensor outputs to increase the estimation residues under intelligent data injection attacks. This is a low cost method compared with encryption schemes over all sensor measurements in communication networks. We show the conditions of a feasible coding matrix under the assumption that the attacker does not have knowledge of the exact coding matrix. An algorithm is developed to compute a feasible coding matrix, and, we show that in general, multiple feasible coding matrices exist. To defend against attackers who estimates the coding matrix via sensor and actuator measurements, time-varying coding matrices are designed according to the detection requirements. A heuristic algorithm to decide the time length of updating a coding matrix is then proposed.
\end{abstract}

\section{Introduction}
\label{sec:intro}
Cyber-physical systems (CPSs) integrate computation and communications to interact with physical processes.   Many applications are considered as CPSs, including high confidence medical devices, energy conservation, environmental control, and safety critical infrastructures--such as water supply systems, electric power, and communication systems~\cite{cps}. Therefore, security is a critical aspect of these systems, and CPSs involve additional challenges in control layer. The problem of secure control is defined, and reasons for mechanisms of information security, sensor network security alone are not sufficient for the security of CPSs are analyzed~\cite{secure_control}. 
The key challenges of CPSs securities are summarized in~\cite{secure_challenge}. 

Novel attack-detection algorithms in cyber security area can be designed, by understanding how attacks affect state estimation and control of the system. Two algorithms to maximize the utility of encrypted devices placed to increase system security are proposed to reduce the cost of communication cost in power grids~\cite{sa_protection}. Tools are developed to protect state-estimation components from stealthy attacks from an intelligent attacker with a partial model of the system~\cite{cs_se}.

Researchers have explored fault detection, isolation and reconfiguration (FDIR) methods to ensure systems' safety and robustness~\cite{survey_fault}. Although active techniques have been designed to tackle various types of attacks, fundamental limitations still exist~\cite{limit_activedetection}. With a limited number of sensor and actuator compromised by the attacker, i.e., some elements of the injection vector is restricted to be zero, resilient state estimators have been designed by previous work. Fawzi et al.\ propose estimation and control schemes of noise free linear systems~\cite{est-control}. Pajic et al.\ present a robust state estimation method in presence of attacks to no more than half of the sensors for systems with noise and modeling errors~\cite{arse}. In contrast, we examine a different case where the attacker can inject an arbitrary vector to the communication between sensors and the estimator/detector/controller component, thus no element of the injection vector is constrained to be zero. 

The monitoring system can detect malicious behaviors in general. Coding and decoding schemes to estimate the state of a scalar stable stochastic linear system with noisy measurements are designed in~\cite{Dey_estcode}. A distributed methodology for detecting and isolating multiple sensor faults in interconnected CPS is proposed in~\cite{Reppa_fd}.  A class of false data injection attacks against state estimators in power grid is analyzed in~\cite{fdi_PowerGrids}. Sequential detection techniques of sensor networks are discussed in~\cite{Nay_sd}. Miao et al.\ design stochastic game approaches for replay attacks detections~\cite{game_replay} and secure control of CPSs~\cite{Miao_game}. 

However, with knowledge of the system model, an intelligent cyber attacker is able to carefully design a data injection sequence, such that the state estimation error increases without triggering the alarm of the monitor~\cite{false_injection},~\cite{accstealth}. Manandhar et al.\ design the Euclidean detector to overcome the limitation of $\chi^2$ detector for fault detection in smart grid~\cite{Mana_kffd}. However, the design of Euclidean detector is based on the voltage signal model of smart grid and whether it works for a general linear system model has not been shown yet. In this work, we consider the detection problem of false data injection attacks for a general linear system model. To address the computational overhead of encryptions on embedded architectures~\cite{encrypt_sensor}, we propose an alternative low cost method to code the sensor measurements for detection. With the coding scheme, no additional detector is required for the system to detect stealthy data injected by an attacker with the knowledge of system model. Compared with error-correcting coding schemes~\cite{correct_code1977}, the sensor outputs coding approaches proposed in this work aim to change the value transmitted over the communication channel instead of correcting errors on bit level. Moreover, the coding scheme proposed in this work does not require additional bits for each plaintext message of the sensor measurements, while an encryption method introduces communication overhead for each sensor message transmitted in the communication channel~\cite{encrypt_key}. We assume that the coding matrix is distributed between sensors and the estimator/detector of the system correctly like an secret encryption key~\cite{encrypt_sn}, and measurement of individual sensor is not corrupted before coded. With the coding matrix, the values sent over the communication channel are changed, without additional bits for encryption overhead~\cite{correct_code1977}, and the scheme is low-cost compared with the scheme of encrypting all sensor outputs.

The contributions of this work are summarized as follows:
\begin{enumerate}
\item The main contribution of this work is a low cost method of coding sensor outputs to detect stealthy false data injection attacks.
We show that the system can detect the original stealthy sensor injections by coding the sensor outputs according to certain conditions. 
\item We also design an algorithm to compute such coding matrices, and show that in general, multiple feasible coding matrices exist. 
\item When the attacker can estimate the coding scheme according to several measurements of sensor and actuator values, we show that it is difficult to get the exact coding matrix in general. Moreover, in this case, the system can either change a new coding matrix or randomly use a set of coding matrices within a time length before the attacker has enough measurements for a good estimation. We design a heuristic algorithm to decide the time length of updating a coding matrix. 
\end{enumerate}


The paper is organized as follows. In Section~\ref{sec:prob} we describe the system and attack models. The conditions that a feasible coding matrix should satisfy are presented in Section~\ref{sec:stealth}. An algorithm to find a feasible coding matrix based on rotation matrix is developed in Section~\ref{algorithm}. A time-varying coding scheme is designed in Section~\ref{Sig_t}. Section~\ref{sec:simulation} shows illustrative examples. Conclusions are given in Section~\ref{sec:conclusion}.

\section{System And Attack Model}
\label{sec:prob}
\begin{figure}[b!]
\vspace{-8pt}
\centering
\includegraphics [width=0.38\textwidth]{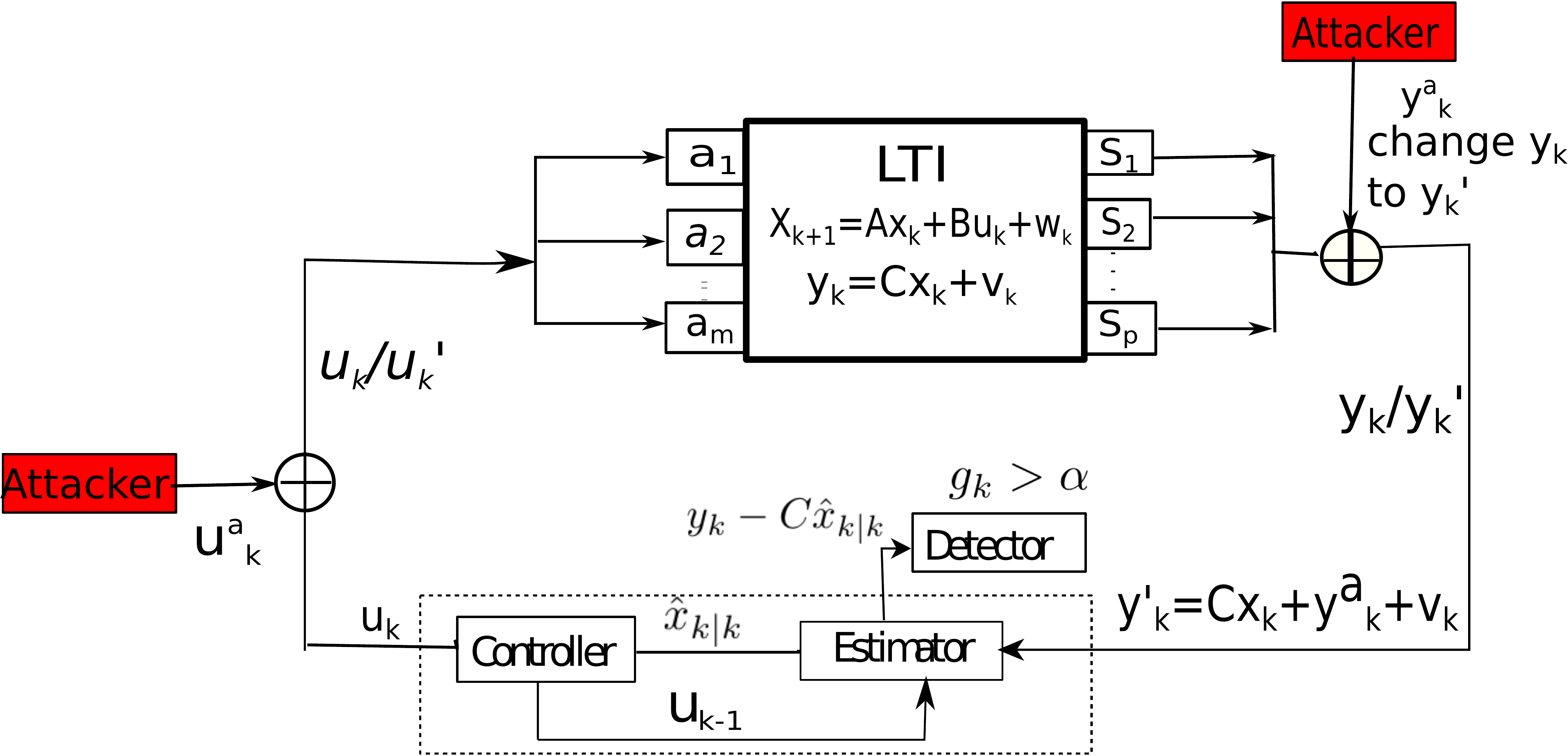}
\vspace{-10pt}
\caption{System diagram, where the system is equipped with an estimator, a detector and a controller. The attacker can inject arbitrary false data vector $y^a_k$ to sensor outputs and $u^a_k$ to actuator inputs.}
\label{fig_sys}
\end{figure}
We will introduce a discrete-time linear time-invariant (LTI) system model, a data injection attack model, and the attacked system model in this section.  
The system architecture is shown in Figure~\ref{fig_sys}. 
\subsection{Linear system model}
\label{sys_model}
 Assume that the CPS is composed of a discrete time LTI system with the following form:
\begin{align}
\begin{split}
x_{k+1}=Ax_{k}+Bu_{k}+w_{k},\quad
y_{k}=Cx_{k}+v_{k},
\end{split}
\label{system}
\end{align}
where $x_{k} \in \mathbb{R}^{n}$ is the system state vector, $u_{k} \in \mathbb{R}^{m}$ is the control input, and $y_{k} \in \mathbb{R}^{p}$ is the sensor observations at time $k$. We do not have specific restrictions for the linear control input $u_k$ here, since the choice of a linear controller does not affect the detection of false data injection, and we will explain the reason later. We assume that $w_{k}\sim N(0,Q)$ and $v_{k}\sim N(0,R)$ are identical independent (i.i.d.) Gaussian noises. 

The optimal Kalman filter used to estimate state $\hat{x}_{k|k}$ is:
\begin{align*}
&\hat{x}_{0|-1}=0,\ \ P_{0|-1}=\Theta,
P_{k+1|k}=AP_kA^T+Q,\\
&K_{k+1}=P_{k+1|k}C^T(CP_{k+1|k}C^T+R)^{-1},\\
&P_{k+1}=(I-K_{k+1}C)P_{k+1|k},\\
&z_{k+1}=y_{k+1}-C(A\hat{x}_ {k}+Bu_k),\\
&\hat{x}_{k+1|k}=A\hat{x}_k+Bu_k,\quad
\hat{x}_{k+1}=\hat{x}_{k+1|k}+K_kz_{k+1}.
\end{align*}
Under the assumption that $(A,B)$ is stabilizable, $(A,C)$ is detectable, we get a steady state Kalman filter, with the error covariance matrix $P$ and Kalman gain matrix $K$:
\begin{align*}
P \triangleq \lim \limits_{k \to \infty} P_{k|k-1}, K\triangleq PC^T(CPC^T+R)^{-1}.
\end{align*}

Without attacks, the estimation residue $z_{k}$ follows a Gaussian distribution $N\sim (0, CPC^{T}+R)$. 
Define the quantities $g_{k}$ as 
$g_{k}=z_{k}^{T}P^{-1}z_{k}$,
where $P$ is the error covariance matrix of Kalman filter, then $g_k$ satisfies a $\chi^{2}$ distribution with $p$ degrees of freedom. A $\chi^{2}$ failure detector considers the standardized residue sequence 
$\eta_{k}= P^{-\frac{1}{2}}z_{k}$
for a monitoring system, and assumes that there exists a $\delta_{\eta}$ such that
$\lim_{k\to \infty} \|E{\eta_{k}}\| \leq \delta_{\eta}.$
We denote $\alpha$ as the threshold for detecting a fault, meaning that the alarm is triggered when $g_{k}> \alpha.$


\subsection{False data injection attack model}
The system model under sensor data injection attack is described as~\eqref{attackmodel}       
\begin{align}
\begin{split}
x'_{k+1}&=Ax'_{k}+B(u'_k+u^a_k)+w_{k}, \\
y'_{k}&=Cx'_{k}+y^{a}_{k}+v_{k},
\end{split}
\label{attackmodel}
\end{align}
where $y^a_k \in \mathbb{R}^p$, $u^a_k \in \mathbb{R}^m$ are arbitrary vectors injected to sensor outputs, actuator inputs by the attacker at time $k$ respectively. When $u^a_k=0$, only sensor values are changed by the attacker.
Assume the adversary has knowledge of the system model described in Section~\ref{sys_model}, and is able to inject data over communication network between sensors and the estimator/detector/controller. 

Without attack, according to the system dynamics and the definition of Kalman filter, the estimation error is 
\begin{align*}
&e_{k} \triangleq x_k-\hat{x}_k,\\
&e_{k+1} = (A-KCA) e_{k}-Kv_k+ (I-KC)w_k.
\end{align*}
When matrix $(A-KCA)$ is stable and $\mathbb{E}w_k=\mathbb{E}v_k=0$, the expectation of estimation error converges to $0$ with a static Kalman filter, i.e., $\lim_{k\to\infty} \mathbb{E}[e_k]\to 0$. Meanwhile, the residual $z_k$ stays in the subspace that does not trigger the alarm with a high probability. 

To illustrate how the sensor injection sequence $y^a_k$ will affect the estimation and monitoring system, we examine how the estimation error and residue will change with $y^a_k$. Denote the estimation residuals of attacked system as
\begin{align*}
z'_{k}=y'_{k+1}-C(A\hat{x}'_ {k}+Bu'_k),
\end{align*}
 where $\hat{x}'_{k}$ is the state estimation of the compromised system. Similarly, we define the estimation error under attack as
 \begin{align*}
 e'_{k} \triangleq x'_k-\hat{x}'_k,
 \end{align*}
 The probability that the sensor injection sequence $y^a_k,\ k=0,1,\dots$ is detectable is given by
 \begin{align*}
 Pr(g'_{k}=(z'_{k})^{T}P^{-1}z'_{k} >\alpha \ \text{for any}\  k).
 \end{align*} 
 
 The difference between the normal and the compromised systems can be captured by:
\begin{align}
\Delta e_{k} \triangleq e'_{k}-e_{k}, \quad \Delta z_{k} \triangleq z'_{k}-z_{k}.
\label{e}
\end{align}
The dynamics of the above difference vectors satisfy
\begin{align}
\begin{split}
&\Delta e_{k+1} = (A-KCA) \Delta e_{k} - Ky^{a}_{k+1}+(B-KCB)u_k^a,\\
&\Delta z_{k+1}= CA\Delta e_k+ y^a_k+CBu_k^a,
\end{split}    
\label{delta_z}                                                                
\end{align}
Hence the difference vectors between normal and compromised systems, $\Delta z_{k}(y^{a}, u^a), \Delta e_{k}(y^{a},u^a)$, are functions of the injection sequences $y^{a}\triangleq (y^{a}_0, y^{a}_1,\dots)$, $u^a \triangleq(u^a_0, u^a_1,\dots)$. 
To simplify the notations, we concisely denote these vectors as $\Delta z_{k}, \Delta e_{k}$, respectively. 

The objectives of the attacker include increasing the estimation error $e'_k$ without triggering the alarm, and destabilizing the system with infinite state estimation error $e'_k$ in the long run. Note that these types of attacks on control systems have been illustrated in the recent years. For instance, the estimated trajectories of Unmanned Ground Vehicle (UGV)~\cite{arse} and Unmanned Aerial Vehicle (UAV) navigation systems~\cite{accstealth} under stealthy data injection attacks (e.g., by GPS spoofing) deviate from the actual trajectories of the autonomous vehicles before being detected. Thus the attacker's objective is equivalent to increasing $\|\Delta e_k\|_2$ (the difference between estimation error of the normal and compromised systems) to infinity without increasing $\|\Delta z_k\|_2$ much as time goes by. Since computing the detecting statistic of compromised system $g'_{k}$ is to integrate a Gaussian distribution on an ellipsoid,  the stealthy requirement can be  approximated by keeping $\|z'_k\|_2$ small. Residues of the normal system $z_k$ are bounded, and the attacker should keep the change of residues bounded make the injection stealthy. It means the following inequality should hold
 \begin{align}
\|\Delta z_{k}\|_2 \leq M,
\label{residue} 
\end{align}
where $M$ is a residue norm change threshold designed by the attacker. The compromised estimation residue should be close to that of the normal system, to deceive the monitoring system. 
 \footnote{The relation between the scale or norm of the injection sequence and the alarm trigger threshold $\alpha$ is shown in Theorem 1 in~\cite{accstealth}.}
When $y^{a}_k$ can be an arbitrary vector, a necessary and sufficient condition for a stealthy injection $y^a_k$ that can increase $\|e'_k\|_2$, $\|x'_k\|_2$ to infinity while keep $\|z'_k\|_2$, $\|\Delta z_{k}\|_2$ bounded is derived in~\cite{accstealth},~\cite{false_injection}. The condition that $Cv \in span (I)$, i.e., there exists $y^{*}$ satisfying $y^{*} =Cv$ is always satisfied by the attack model~\eqref{attackmodel}. Hence, we have the following proposition. 
\begin{proposition}
There exists a stealthy sequence $y^a_k, k=0, 1, \dots,$ given the attacked system model~\eqref{attackmodel}, if and only if matrix $A$ has an unstable eigenvalue $\lambda$ and the corresponding eigenvector $v$, such that
$v \in span (Q_{oa})$, where $Q_{oa}$ is the controllability matrix associated with the pair $(A-KCA, K)$.
\label{stealth_fi}
\end{proposition}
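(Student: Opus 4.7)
The plan is to work directly with the difference dynamics~\eqref{delta_z} specialized to $u^a_k=0$, namely $\Delta e_{k+1}=\tilde{A}\,\Delta e_k - K y^a_{k+1}$ and $\Delta z_{k+1}=CA\,\Delta e_k + y^a_{k+1}$, where $\tilde{A}\triangleq A-KCA$. Recall that under the stabilizability/detectability assumptions the steady-state Kalman gain makes $\tilde{A}$ Schur stable, so if $y^a$ itself were bounded then $\Delta e_k$ would be bounded; the whole game is therefore about sequences $y^a_k$ that are allowed to be large as long as the \emph{output} $\Delta z_k = CA\,\Delta e_k+y^a_{k+1}$ remains bounded.

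\textbf{Sufficiency.} Assume $Av=\lambda v$ with $|\lambda|\ge 1$ and $v\in\mathrm{span}(Q_{oa})$. I would construct the attack in two phases. In the \emph{reach} phase, use controllability of $(\tilde A,K)$: since $v$ lies in the reachable subspace, there is a finite input sequence $\{y^a_1,\dots,y^a_n\}$ steering $\Delta e_0=0$ to $\Delta e_n=\lambda^n v$. During these finitely many steps $\Delta z_k$ takes only finitely many bounded values. In the \emph{stealth} phase, for $k>n$ set $y^a_{k}= -\lambda^{k}Cv$. A direct computation using $Av=\lambda v$ gives
\begin{align*}
\Delta e_{k+1}&=\tilde{A}(\lambda^k v)-K(-\lambda^{k+1}Cv)\\
&=\lambda^{k+1}v-\lambda^{k+1}KCv+\lambda^{k+1}KCv=\lambda^{k+1}v,\\
\Delta z_{k+1}&=CA(\lambda^k v)+(-\lambda^{k+1}Cv)=0,
\end{align*}
so $\|\Delta e_k\|_2\to\infty$ while $\|\Delta z_k\|_2$ is uniformly bounded, verifying the stealthy attack requirement~\eqref{residue}.

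\textbf{Necessity.} For the converse I would contrapose: assume that every eigenvector of $A$ corresponding to an unstable eigenvalue lies outside $\mathrm{span}(Q_{oa})$, and show $\|\Delta e_k\|_2$ cannot be driven to infinity while $\|\Delta z_k\|_2$ stays bounded. The key structural fact is that $\mathcal R\triangleq\mathrm{span}(Q_{oa})$ is $\tilde A$-invariant, so the state $\Delta e_k$ (which starts at $0$) stays in $\mathcal R$ for all $k$. Decomposing $y^a_{k+1}=\Delta z_{k+1}-CA\,\Delta e_k$ and substituting back yields the autonomous-like recursion
\begin{align*}
\Delta e_{k+1}=(\tilde A+KCA)\,\Delta e_k - K\,\Delta z_{k+1} = A\,\Delta e_k - K\,\Delta z_{k+1},
\end{align*}
so on the reachable subspace the effective ``open-loop'' matrix is $A|_{\mathcal R}$ driven by a bounded forcing $-K\,\Delta z_{k+1}$. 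Unboundedness of $\Delta e_k$ under bounded forcing therefore requires $A|_{\mathcal R}$ to have an eigenvalue with $|\lambda|\ge 1$ (via the Jordan form / standard linear-systems argument that a stable matrix driven by bounded inputs produces bounded trajectories, and that any unbounded mode must align with an unstable eigenvector). Finally, an eigenvalue of $A|_{\mathcal R}$ on or outside the unit disk corresponds to an eigenvector of $A$ lying in $\mathcal R$, contradicting the assumption. Combining this with the trivial observation that the trajectory must lie in $\mathcal R$ gives the claim.

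\textbf{Main obstacle.} The sufficiency argument is essentially an explicit construction, so I do not expect it to be difficult. The delicate step is the necessity direction: one has to argue carefully that ``$A|_{\mathcal R}$ has an unstable eigenvalue whose eigenvector stays inside $\mathcal R$,'' which requires handling possibly non-diagonalizable $A$ via Jordan blocks and making sure the ``bounded $\Delta z$ implies polynomial-at-worst growth on the stable part'' argument is clean. This is the place where invoking the cited prior results (\cite{false_injection},~\cite{accstealth}) most naturally shortens the proof.
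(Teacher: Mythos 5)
The paper never actually proves Proposition~\ref{stealth_fi}: the result is imported from \cite{accstealth} and \cite{false_injection}, the only argument offered being that the extra condition ``$Cv\in\mathrm{span}(I)$'' from those references is vacuous when the attacker may inject an arbitrary vector. Your self-contained proof is therefore a genuinely different route, and its skeleton is sound. The two-phase reach-then-ride construction is the right idea for sufficiency, and your necessity argument via the identity $\Delta e_{k+1}=A\,\Delta e_k-K\,\Delta z_{k+1}$ is cleaner than what the cited sources do; just state explicitly that $\mathcal R=\mathrm{span}(Q_{oa})$ is $A$-invariant (it is, because $A=\tilde A+KCA$ and $\mathrm{range}(K)\subseteq\mathcal R$), since you need that to speak of $A|_{\mathcal R}$ and to conclude that an unstable eigenvalue of $A|_{\mathcal R}$ gives an eigenvector of $A$ inside $\mathcal R$. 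Two small repairs to sufficiency: ``finitely many bounded values'' of $\Delta z_k$ in the reach phase need not lie below the attacker's threshold $M$, but by linearity you can scale the target to $\epsilon\lambda^n v$ so the transient residues shrink while $\epsilon|\lambda|^k$ still diverges; and for complex $\lambda$ you must pass to real and imaginary parts of $\lambda^k v$.

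The one genuine gap is the boundary case $|\lambda|=1$, which the paper plainly intends to cover --- its own numerical example uses $A$ with eigenvalue $\lambda=1$ and an injection growing like an arithmetic progression. With $|\lambda|=1$ your stealth-phase input $y^a_k=-\lambda^k Cv$ gives $\Delta e_k=\lambda^k v$, which is \emph{bounded}, so no divergence is produced and sufficiency fails as written. The fix is resonant forcing: since feedback through the input matrix does not change reachability, the reachable subspace of $(A,K)$ is again $\mathcal R$, so $(A|_{\mathcal R},K)$ is reachable and a bounded input sequence can pump the marginal mode to make $\Delta e_k$ grow (linearly) while $\Delta z_k$ stays bounded. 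Your necessity direction already establishes the matching $|\lambda|\ge 1$ threshold, so once this case is patched the two directions of the equivalence line up.
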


\section{Coding Sensor Outputs For Detecting Stealth Sensor Data Injection}
\label{sec:stealth}
Existing statistical detectors, active monitor schemes (design some additive control input $u^d_k$) and fault detection filters have limitations, that even actuators are not compromised, they cannot detect stealthy sensor data injection attacks. It is necessary to design some inexpensive techniques to compensate for the vulnerability of the system under intelligent sensor data injection attacks. It has been shown that by only compromising sensors, attackers can induce infinite estimation error without being detected under monitoring systems like a 
$\chi^2$ detector~\cite{accstealth}. Therefore, we first discuss the case of stealthy sensor false data injection attacks in this section.

\subsection{Limitations of existing approaches}
\textbf{The limitation of active monitor approach}: 
Under the assumption that actuators work appropriately for the attacked system~\eqref{attackmodel}, the challenge here is whether adding $u^d_k$ to the pre-designed linear control input $u_k$ (such as optimal LQG control) can help to detect stealthy sensor data injections. 
For instance, consider a new control input 
\begin{align}
\tilde{u}_k=u_k+u^d_k,
\label{ud}
\vspace{-8pt}
\end{align}
where $u^d_k$ is some random authentication signal or a constant value. It is worth noting that active monitor approaches do not help for detecting sensor data injection attacks described in model~\eqref{attackmodel} . 
\begin{lemma}
There exists no active monitor in the form~\eqref{ud} that can increase the detection probability of a stealthy sensor data injection sequence, for the system~\eqref{system} equipped with a Kalmen Filter and a $\chi^2$ detector.
\end{lemma}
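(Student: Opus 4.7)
The plan is to argue that the active monitor signal $u^d_k$ cancels out of both the estimation-error discrepancy $\Delta e_k$ and the residue discrepancy $\Delta z_k$, so it can have no effect on what the $\chi^2$ detector sees. Concretely, I will consider the baseline system~\eqref{system} and the attacked system~\eqref{attackmodel} with $u^a_k=0$, both driven by the modified input $\tilde u_k = u_k + u^d_k$ and coupled through identical noise realizations $w_k, v_k$ and identical initial conditions. Since actuators are uncompromised, the true states coincide, $x'_k = x_k$ for all $k$, so the only discrepancy between the two processes comes from the injected sensor signal $y^a_k$.

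Next I would subtract the two Kalman updates. The estimator applies the known signal $\tilde u_k = u_k + u^d_k$ in both systems, so the $B\tilde u_k$ terms cancel term-by-term, leaving
\begin{align*}
\hat x'_{k+1} - \hat x_{k+1} = (A-KCA)(\hat x'_k - \hat x_k) + K y^a_{k+1}.
\end{align*}
Using $\Delta e_k = -(\hat x'_k - \hat x_k)$ (a consequence of $x'_k = x_k$) and $\Delta z_{k+1} = y^a_{k+1} + CA\Delta e_k$, one recovers exactly the recursion~\eqref{delta_z} specialised to $u^a_k=0$. The critical observation is that neither $u_k$ nor $u^d_k$ appears in this recursion: $\Delta e_k$ and $\Delta z_k$ are functionals of $y^a$ alone.

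From here the conclusion follows in two steps. First, in the baseline system the residue $z_k$ remains $\mathcal{N}(0, CPC^T+R)$ regardless of the choice of $u^d_k$, because the filter is designed around the known applied input; equivalently, $u^d_k$ is a deterministic (or, if random, known to the estimator) feedthrough that is absorbed by the prediction step. Second, because $z'_k = z_k + \Delta z_k$ and $\Delta z_k$ does not depend on $u^d$, any sequence $y^a$ that is stealthy in the sense of~\eqref{residue} in the baseline system produces a residue with identical statistics under the active monitor. Hence $\Pr(g'_k > \alpha)$ is unchanged, proving the lemma.

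The only subtlety I expect is the argument that the attacker's ability to remain stealthy is not degraded by randomising $u^d_k$ and hiding it from the adversary. This is handled by the invariance above: since the map $y^a \mapsto \Delta z$ does not involve $u^d$ at all, the attacker needs no knowledge of $u^d$ to realise the same stealthy $\Delta z_k$ that worked without the active monitor. Thus no authentication-style signal injected additively into the actuator channel can improve detection of a purely sensor-side injection attack under a $\chi^2$ detector.
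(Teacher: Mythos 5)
Your proposal is correct and follows essentially the same route as the paper: show that the known input $\tilde u_k = u_k + u^d_k$ cancels in the subtraction that forms $\Delta e_k$ and $\Delta z_k$, so the recursion~\eqref{delta_z} (with $u^a_k=0$) is a functional of $y^a$ alone and the residue statistics seen by the $\chi^2$ detector are unaffected by any additive monitoring signal. One minor imprecision: under feedback control the compromised system applies $u'_k \neq u_k$ (it is computed from corrupted estimates), so $x'_k = x_k$ need not hold; but this is inessential, since the $Bu$ terms still cancel between the state propagation and the estimator's prediction step, which is exactly the cancellation the paper invokes.
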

\begin{proof}
We denote the difference between estimation residual and estimation error of the normal and compromised system for the system with the controller~\eqref{ud} as $\Delta \tilde{z}_{k+1}$ and $\Delta \tilde{e}_{k+1}$, respectively. By the definition of $\Delta \tilde{z}_{k+1}$ and $\Delta \tilde{e}_{k+1}$ and a similarly calculation process to get~\eqref{delta_z}, we have 
$\Delta \tilde{e}_{k}=\Delta {e}_{k}$, and $\Delta \tilde{z}_{k+1}=CA \Delta \tilde{e}_k+y^a_{k+1}+CBu^a_k$.
Any additional control input $u^d_k$ will be eliminated by the deduction of $\tilde{z}_{k+1}$ and $\tilde{z}'_{k+1}$ to get $\Delta z_{k+1}$. 
The active control input does not increase the norm of $\Delta \tilde{z}_{k+1}$ compared with $\Delta z_{k+1}$, which means there exists no linear form of $\tilde{u}_k$ as described above that can increase $\|\Delta z_{k+1}\|_2$ under $y^a_k$ for the system~\eqref{attackmodel}. 
\end{proof}

The limitations of active monitors for a unified LTI model are explained in Theorem 4.7 of~\cite{limit_activedetection}
\footnote{A different case when adding exogenous Gaussian distribution control input can detect replay attacks is discussed in~\cite{replay}.}.
From this perspective, different linear controllers are equivalent under stealth sensor data injection attacks, and we do not restrict the controller model for designing our detection techniques. 

\textbf{The limitation of fault detection filter}:
Besides Kalman filter, observer-based fault detection filters for LTI systems with unknown error have been developed. The design requirements usually include robustness to unknown inputs and sensitivity to faults. 
Such filters generate a different residue from $z_k$ of Kalman filter. Consider the following form of residual generator and residual evaluator (including a threshold and a decision logic unit, see~\cite{fd_continuous} for details)~\cite{fd_continuous}:
\begin{align}
\begin{split}
\hat{x}_{k+1}&= A\hat{x}_k+Bu_k + H(y_k-\hat{y}_k),\\
\hat{y}_k&=C\hat{x}_k, \quad
r_k=V(y_k-\hat{y}_k),
\end{split}
\label{fd_filter}
\end{align}
where $\hat{x}_k \in \mathbb{R}^n$ and $\hat{y}_k \in \mathbb{R}^p$ represent the state and output estimation vectors, respectively, and $r_k$ is the residual signal.
This fault detector shares the same limitation with Kalman filter, i.e., the intelligent sensor data injection attack is stealth for the filter described as~\eqref{fd_filter}, since the residue is still observer based difference between $y_k$ and $\hat{y}_k$.  

\subsection{Coding sensor outputs to detect stealth data injection}
Since existing monitoring systems cannot detect intelligent false data injection attacks, and encryption method has a constraint of significant computation overhead, we propose a design of \textit{coding the sensor outputs} to detect stealth sensor data injection attacks. An intelligent attacker designs the sequence $y^a_k$ carefully to keep the change of residue $\|\Delta z_k\|_2 \leq M$, where $M$ is a constant. Thus, the objective of a detecting approach is equivalent to increasing $\|\Delta z_k\|_2$ as fast as possible under a stealthy data injection sequence, and $\|\Delta z_k\|_2$ should increase to infinity as time goes to infinity.
\begin{figure}[b!]
\centering
\includegraphics[width=0.42\textwidth]{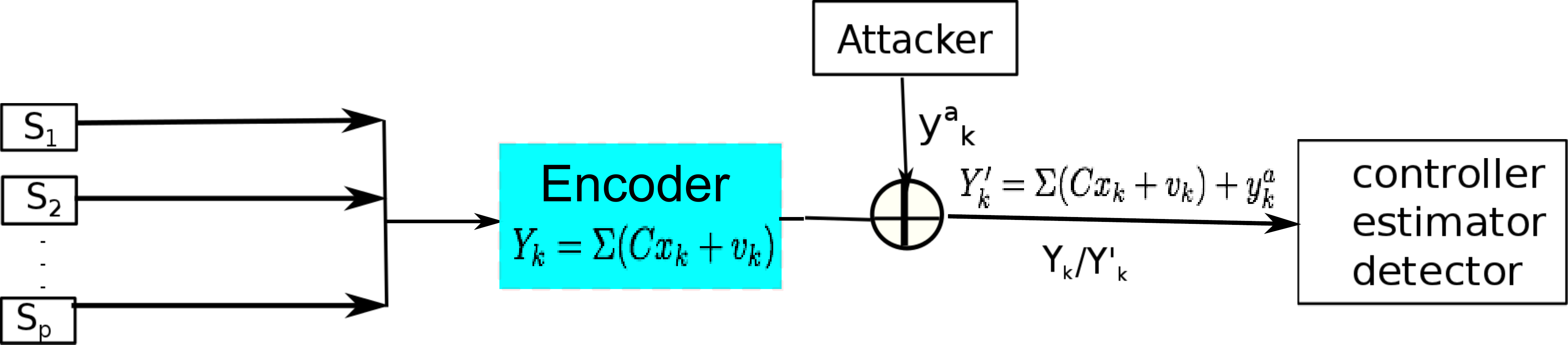}
\vspace{-8pt}
\caption{System diagram when coding sensor outputs with a matrix $\Sigma$ that satisfies the conditions of Theorem~\ref{code}. The attacker can inject arbitrary false data vector $y^a_k$ to sensor outputs.}
\label{fig_trans}
\end{figure}

The necessary and sufficient conditions for stealth false sensor data injection in~Corollary~\ref{stealth_fi} assume that the attacker knows $(A,B,C,K)$. Parameters $A$ and $B$ are related to physical dynamics that may not be altered,  while $C$ is related to the sensor measurements, corresponding specific physical states. 
Without changing the physical setup, we still can manipulate the sensor outputs. To violate the attacker's design,
we consider the method of transforming sensor outputs as shown in Figure~\ref{fig_trans}--instead of sending the output vector $\mathbf{y}_k=Cx_k+v_k$ to the estimator/controller/detector, sensors transmit the value 
\begin{align}
Y_k =\Sigma (Cx_k +v_{k}),  C\in \mathbb{R}^{p\times n},
\label{sig_y}
\end{align} 
where $\Sigma \in \mathbb{R}^{p\times p}$ is an invertible matrix. We assume that the measurement of individual sensor is not corrupted yet before coding, and injection sequence appears in the communication between sensors to the estimator/controller/detector. One can think of $\Sigma$ as an inexpensive code, and compare $\Sigma$ with an encryption key. By encrypting only the coding matrix channel once, the coding approach saves encryption cost compared with encrypting all sensor outputs for every time $k$. 

We assume that the attacker does not know the matrix $\Sigma$ at least before estimating the matrix based on knowledge of matrices $(A,B,C,K)$ and sensor and actuator values, since the coding matrix $\Sigma$ is not fixed by the physical model of the system and can be time-varying, calculated in polynomial time when a new coding matrix is needed (an algorithm will be proposed in the next section). We assume that the attacker cannot access the coding matrix directly if he/she only applies eavesdropping techniques to the unencrypted communication channel and the process of distributing $\Sigma$ is protected. We will propose a time-varying coding scheme later in this work. When the attacker has designed a sequence of stealthy attack signal $y^{a}_k$ for the original system without the knowledge of the coding matrix $\Sigma$, the false sensor value after coding changes to:
\begin{align}
Y'_k =\Sigma Cx_k + y^{a}_{k}+\Sigma v_{k}.
\label{sig_ya}
\end{align}

Since $\Sigma$ is an invertible matrix, when the state estimator receives $Y_k$ or $Y'_k$, the encoded packet is decoded as
\begin{align}
\tilde{y}'_k= \Sigma^{-1}Y'_k=y_k+\Sigma^{-1}y^a_k,
\label{y_decode}
\end{align}
And we still use the same Kalman filter and $\chi^2$ detector on the decoded sensor outputs. Similar as the definitions of $\Delta e_k$ and $\Delta z_k$~\eqref{delta_z} for sensor outputs before coding, we define $\Delta e'_k$ and $\Delta z'_k$ as the change of state estimation and residue for coded sensor outputs without attack~\eqref{sig_y} and under attack~\eqref{sig_ya}, respectively. With $\Sigma$, a stealth data injection designed for~\eqref{system} (with parameters $(A,B, C,K)$),  $\|\Delta z'_k\|_2$ increases to infinity as $k \to \infty$ under certain conditions.
In the following theorem, we show the sufficient conditions that $\Sigma$ should satisfy for any stealth sequence of $y^a_k, k=0,1,\dots$ that satisfies Theorem~\ref{stealth_fi}.

\begin{theorem}
Given an attacked system model~\eqref{attackmodel}, assume that $(A,C)$ is detectable, $u^a_k=0,$ and the attacker designs a sequence of sensor data injection $y^a_k, k=0,1,\dots$, based on one unstable eigenvector $v \in span (Q_{oa})$, where $Q_{oa}$ is the controllability matrix associated with the pair $(A-KCA, K)$. If there exists an invertible matrix $\Sigma$, and the direction of $\Sigma Cv$ is not the same with that of $Cv$, i.e.,  
\begin{align}
\frac{(Cv)'\Sigma Cv}{\|\Sigma Cv\|_2\|Cv\|_2} \neq  1, 
\label{Sigma_c}
\end{align}
then after injecting $y^a_k$ the estimation residue change satisfies $\lim_{k\to\infty}\|\Delta z'_k\|_2 \to \infty$, by coding sensor outputs~\eqref{sig_y} with $\Sigma$. 
\label{code}
\end{theorem}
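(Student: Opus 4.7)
The plan is to trace the difference equations \eqref{delta_z} through the decoded signal $\tilde{y}'_k$ of \eqref{y_decode}, derive an explicit asymptotic expression for $\Delta z'_k$, and reduce the claim to showing that a single vector is not in the kernel of a particular matrix. Because the Kalman filter is driven by $\tilde{y}'_k = y_k + \Sigma^{-1}y^a_k$, linearity (with $u^a_k=0$) gives the coded analogues
\begin{align*}
\Delta e'_{k+1} &= (A-KCA)\Delta e'_k - K\Sigma^{-1}y^a_{k+1},\\
\Delta z'_{k+1} &= CA\,\Delta e'_k + \Sigma^{-1}y^a_{k+1}.
\end{align*}
Thus coding is equivalent to running the uncoded system against the ``re-rotated'' injection $\Sigma^{-1}y^a_k$.

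Next I would use the canonical unbounded stealth sequence $y^a_k = -\lambda^k Cv$. A direct check from \eqref{delta_z} using $Av=\lambda v$ shows this yields $\Delta e_k = \lambda^k v$ and $\Delta z_k = 0$ for the uncoded system, realizing Proposition~\ref{stealth_fi}. Set $M := A-KCA$ (Schur stable since $(A,C)$ is detectable) and assume zero initial mismatch. Variation of parameters then gives
\begin{align*}
\Delta e'_k = \lambda^k \sum_{i=0}^{k-1}(\lambda^{-1}M)^{i}\,K\Sigma^{-1}Cv.
\end{align*}
Since $\rho(\lambda^{-1}M) = |\lambda|^{-1}\rho(M) < 1$, the partial sums converge to $(I-\lambda^{-1}M)^{-1}K\Sigma^{-1}Cv$; substituting into the residue equation and isolating the dominant term gives
\begin{align*}
\Delta z'_{k+1} = \lambda^{k+1}\,W\,\Sigma^{-1}Cv + o(|\lambda|^{k}),\quad W := \lambda^{-1}CA(I-\lambda^{-1}M)^{-1}K - I.
\end{align*}
Because $|\lambda|>1$, it suffices to show $W\Sigma^{-1}Cv\neq 0$.

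I would finish by characterizing $\ker W$ explicitly. Setting $\eta := (I-\lambda^{-1}M)^{-1}K\xi$, the identity $W\xi=0$ reads $CA\eta = \lambda\xi$; expanding $(I-\lambda^{-1}M)\eta = K\xi$ with $M = A-KCA$ and using $CA\eta = \lambda\xi$ collapses to $(\lambda I - A)\eta = 0$ together with $\xi = C\eta$. Hence $\ker W = C\bigl(\ker(\lambda I - A)\bigr)$, which under the single-eigenvector hypothesis of the theorem is the one-dimensional line $\mathrm{span}(Cv)$. Therefore $W\Sigma^{-1}Cv\neq 0$ iff $\Sigma^{-1}Cv\notin \mathrm{span}(Cv)$, iff $\Sigma Cv$ is not collinear with $Cv$, which is exactly what hypothesis \eqref{Sigma_c} rules out.

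The main obstacle will be the kernel computation: the algebra couples the Neumann-series representation of $(I-\lambda^{-1}M)^{-1}$ with the Kalman-gain identity $M=A-KCA$, and the cancellations are sign-sensitive. A useful sanity check is the uncoded case $\Sigma=I$, which must give $W\,Cv = 0$ and hence $\Delta z'_k\equiv 0$, matching Proposition~\ref{stealth_fi}. Once $\ker W = \mathrm{span}(Cv)$ is established, the theorem follows directly from \eqref{Sigma_c}.
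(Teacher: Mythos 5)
Your route is genuinely different from the paper's and, on the key step, more rigorous. The paper does not compute $\Delta z'_k$ from first principles: it imports from~\cite{accstealth} the decomposition $\Delta e_k=c_kv+\epsilon_{1k}$, asserts that $\Delta e'_k$ has the \emph{same} unbounded component $c_kv$, and writes down $\Delta z'_k=c_k\lambda(Cv-\Sigma^{-1}Cv)+\epsilon_{3k}$ in~\eqref{sig_z}, concluding detection whenever $\Sigma^{-1}Cv\neq Cv$. Your variation-of-parameters/Neumann-series computation replaces that asserted formula with $\Delta z'_{k+1}\approx\lambda^{k+1}W\Sigma^{-1}Cv$, and since $WCv=0$ this equals $\lambda^{k+1}W(\Sigma^{-1}Cv-Cv)$ --- i.e.\ the paper's direction vector pushed through $W$. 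The kernel identity $\ker W=C\bigl(\ker(\lambda I-A)\bigr)$ (which I checked: expanding $(I-\lambda^{-1}M)\eta=K\xi$ with $M=A-KCA$ and $CA\eta=\lambda\xi$ does collapse to $(\lambda I-A)\eta=0$, $\xi=C\eta$, and the converse inclusion also holds) is the real content, and it yields the sharper, correct criterion: the dominant term survives iff $\Sigma^{-1}Cv\notin\mathrm{span}(Cv)$, i.e.\ iff $\Sigma Cv$ is not \emph{collinear} with $Cv$. What your approach buys is exactly this precision --- the paper's~\eqref{sig_z} would wrongly predict detection for $\Sigma Cv=\tfrac12 Cv$ or $\Sigma Cv=-Cv$, where the coded injection $\Sigma^{-1}y^a_k$ is just a scalar multiple of the original stealthy sequence and hence, by linearity, still stealthy.

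Two caveats. First, your last sentence over-claims: hypothesis~\eqref{Sigma_c} only excludes $\Sigma Cv$ being a \emph{positive} multiple of $Cv$ (cosine $=1$), not a negative one. For $\Sigma Cv=-\alpha Cv$, $\alpha>0$, the hypothesis holds yet your own analysis gives $W\Sigma^{-1}Cv=0$ and the attack remains undetected --- so your argument proves the theorem only under the strengthened condition that the cosine is $\neq\pm1$, and in doing so exposes that the theorem as literally stated fails in the anti-parallel edge case. You should state this explicitly rather than assert the match. Second, you analyze only the canonical injection $y^a_k=-\lambda^kCv$, whereas the theorem (and the paper's proof, via~\eqref{ya}) covers any stealthy sequence $y^a_{k+1}=-c_k\lambda Cv+\epsilon_{2k}$ with bounded $\epsilon_{2k}$; this is repairable --- the $\epsilon_{2k}$ contribution to $\Delta e'_k$ and $\Delta z'_k$ stays bounded because $M$ is Schur stable, and the recursion forces $c_k$ to grow geometrically --- but the reduction needs to be said, since otherwise you have only verified the conclusion for one attack rather than for every attack the attacker may design.
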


\begin{proof}
Given a system under data injection attacks as~\eqref{attackmodel}, we assume that the system has one unstable eigenvector $v$ with corresponding eigenvalue $\lambda$. According to the definition in equation~\eqref{e}, the dynamics of $\Delta e_{k}, \Delta z_k$ satisfy~\eqref{delta_z} with $u^a_k=0$.
For coded sensor outputs~\eqref{sig_ya}, after decoding 
\begin{align}
\begin{split}
\Delta e'_{k+1}&=(A-KCA )\Delta e'_{k} - K\Sigma^{-1} y^{a}_{k+1},\\
\Delta z'_{k+1}&=CA\Delta e'_k+ \Sigma^{-1}y^a_{k+1},
\end{split}
\label{ez}
\end{align}
The proof of \textit{Theorem $1$} in~\cite{accstealth} shows that under a stealthy sensor data injection sequence, the only component of $\Delta e_k$ that goes to infinity eventually depends on the unstable eigenvector, denoted as
$c_k v, \lim_{k \to \infty} c_k =\infty,$
and $\Delta e_k$ can be decomposed as
$\Delta e_k=c_k v+\epsilon_{1k}, \|\epsilon_{1k}\|_2 \leq M_1.$

To keep $\Delta z_k$ bounded as $k \to \infty$, any stealthy injection sequence $y^a_k$ must satisfy
\begin{align}
y^a_{k+1}=- c_k\lambda Cv+ \epsilon_{2k}, \|\epsilon_{2k}\|_2 \leq M_2, k=0,1,2,\dots,
\label{ya}
\end{align}  
where $M_2$ is a constant such that $\|\Delta z_k\|_2 \leq M$ for all $k$. 

We assume that the attacker does not know $\Sigma$, and designs an injection sequence for the original system~\eqref{system} as described in~\eqref{ya}. Similarly as $\Delta e_k$, the only component of $\Delta e'_k$ that can goes to infinity is $c_k v$, since matrix $A$ is not changed by the coding matrix $\Sigma$. However, with any $y^a_k$ in~\eqref{ya}, $\Delta z'_k$ can be decomposed as 
\begin{align}
\Delta z'_k=c_k\lambda(Cv-\Sigma^{-1} Cv) + \epsilon_{3k}, k=0,1,2,\dots, 
\label{sig_z}
\end{align} 
where $\epsilon_{3k}$ is a bounded vector components of $\Delta z'_k$.  
When $\Sigma$ satisfies equation~\eqref{Sigma_c}, $ \Sigma Cv-Cv \neq 0$. With $c_k \to \infty$, $\|\Delta z'_k\| \to \infty$ as $k \to \infty$. 
\end{proof}

We call a matrix $\Sigma$ that satisfies the conditions of Theorem~\ref{code} a feasible coding matrix. Theorem~\ref{code} shows that even the attacker knows system parameters $(A,B,C,K)$, without changing the physical structure or altering $A,B$, we can utilize the sensor data to get different residues for detecting.
Leveraging sensor outputs is the key reason to detect a stealth sensor data injection. 
It is worth noting that here we do not constrain specific structure of the matrix $\Sigma$ besides conditions in Theorem~\ref{code}. For an LTI system, $\Sigma C$ is simply a linear transform of the original sensor measurement. When $A$ has several unstable eigenvectors satisfying Corollary~\ref{stealth_fi}, the following lemma extends the result of Theorem~\ref{code}.
\begin{lemma}
Given an attacked system~\eqref{attackmodel} with $(A,C)$ detectable and a set of unstable eigenvectors $v_1, \dots, v_u \in span (Q_{oa})$, where $Q_{oa}$ is the controllability matrix associated with the pair $(A-KCA, K)$, if $\Sigma$ is an invertible matrix, and 
\begin{align}
\frac{(C\tilde{v})'\Sigma C\tilde{v}}{\|\Sigma C\tilde{v}\|_2\|C\tilde{v}\|_2} \neq  1,
\end{align}
for any linear combinations of $v_1, \dots, v_u$ -- $\tilde{v}$, then $\Sigma$ is a feasible coding matrix to  increase $\|\Delta z'_k\|_2$ for any stealth data injection to attacked system~\eqref{attackmodel}.
\label{code_lemma}
\end{lemma}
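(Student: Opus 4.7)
The plan is to mirror the single-eigenvector proof of Theorem~\ref{code}, replacing the scalar growth factor $c_k v$ along one unstable eigenvector by a growing vector lying in the span of $v_1,\dots,v_u$, and then to apply the direction condition to the (time-varying) linear combination that appears in the limit.

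First I would rerun the decomposition that underlies Proposition~\ref{stealth_fi} and the proof of Theorem~\ref{code} in the multi-eigenvector setting. Since $(A-KCA)$ restricted to the stable subspace contracts, any unbounded component of $\Delta e_k$ must lie in $\mathrm{span}\{v_1,\dots,v_u\}$. Hence I would write
\begin{align*}
\Delta e_k \;=\; \sum_{i=1}^{u} c_{k,i}\, v_i \;+\; \epsilon_{1k}, \qquad \|\epsilon_{1k}\|_2 \le M_1,
\end{align*}
and set $\tilde{v}_k \triangleq \sum_{i=1}^{u} c_{k,i} v_i$. The stealthiness constraint $\|\Delta z_k\|_2 \le M$ forces, exactly as in equation~\eqref{ya},
\begin{align*}
y^a_{k+1} \;=\; -\, C A\, \tilde{v}_k \;+\; \epsilon_{2k}, \qquad \|\epsilon_{2k}\|_2 \le M_2,
\end{align*}
because $A \tilde{v}_k$ is itself a linear combination of the $v_i$ (scaled by their eigenvalues), and the only way for $CA\Delta e_k + y^a_{k+1}$ to stay bounded is to have $y^a_{k+1}$ cancel the growing part $CA\tilde{v}_k$.

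Next I would substitute this stealthy $y^a_{k+1}$ into the coded-residue recursion~\eqref{ez}. Because $\Sigma$ does not alter $A$, the unbounded part of $\Delta e'_k$ remains $\tilde{v}_k$, and a direct calculation gives
\begin{align*}
\Delta z'_{k+1} \;=\; CA\tilde{v}_k \;-\; \Sigma^{-1} C A \tilde{v}_k \;+\; \epsilon_{3k}
\;=\; \bigl(I - \Sigma^{-1}\bigr)\, C A\tilde{v}_k \;+\; \epsilon_{3k},
\end{align*}
with $\epsilon_{3k}$ bounded. Writing $\tilde{w}_k = A\tilde{v}_k$, which is again a linear combination of the $v_i$'s, the hypothesis of the lemma applied to $\tilde{w}_k$ says exactly that $\Sigma C\tilde{w}_k$ is not collinear with $C\tilde{w}_k$, so $(I-\Sigma^{-1})C\tilde{w}_k \neq 0$ whenever $\tilde{w}_k \neq 0$. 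Since $\|\tilde{v}_k\|_2 \to \infty$ along some sequence (otherwise the attack would not be successful in destabilizing the estimate), the leading term dominates $\epsilon_{3k}$, and therefore $\|\Delta z'_k\|_2 \to \infty$.

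The main obstacle I anticipate is the quantitative bookkeeping when multiple $c_{k,i}$ grow at different rates: one must argue that the direction $\tilde{v}_k/\|\tilde{v}_k\|_2$ stays in the unit sphere of $\mathrm{span}\{v_1,\dots,v_u\}$, so that the lower bound $\inf_{\tilde{v}\in S} \|(I-\Sigma^{-1})CA\tilde{v}\|_2$ (over a compact set $S$ of unit-norm linear combinations) is strictly positive by the hypothesis and by continuity. This compactness/uniformity step is what converts the pointwise non-collinearity assumption into the asymptotic blow-up of $\|\Delta z'_k\|_2$, and it is the only place where the proof genuinely differs from the single-eigenvector case.
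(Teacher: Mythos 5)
Your proof follows essentially the same route as the paper's: decompose the unbounded part of $\Delta e_k$ into $\mathrm{span}\{v_1,\dots,v_u\}$, express the stealthy injection as the corresponding combination of the $Cv_i$ (scaled by the eigenvalues), and reduce $\Delta z'_k$ to $(I-\Sigma^{-1})C\tilde{v}_k$ plus a bounded term, after which the direction condition applied to the linear combination $\tilde{v}_k$ gives the conclusion. The one place you go beyond the paper is the final compactness/uniformity step --- bounding $\inf_{\tilde{v}\in S}\|(I-\Sigma^{-1})C A\tilde{v}\|_2$ from below over the unit sphere $S$ of the span --- which the paper omits but which is genuinely needed to turn the pointwise non-collinearity hypothesis into $\|\Delta z'_k\|_2\to\infty$ when the direction of $\tilde{v}_k$ varies with $k$; this is a correct and worthwhile refinement.
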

\begin{proof}
When matrix $A$ has a set of unstable eigenvectors $v_1,\dots, v_u$ with corresponding eigenvalues $\lambda_1,\dots, \lambda_u$, similar as the proof of Theorem~\ref{code}, a stealthy injection sequence takes the form
\begin{align*}
y^a_k= \sum_{i=1}^{u}c_{ik} \lambda_i C v_i + \epsilon_{2k}, \|\epsilon_{2k}\|_2 \leq M_2, k=0,1,2,\dots,
\end{align*}
and the change of residual is defined as
\begin{align*}　
\Delta z'_k &=\sum_{i=1}^{u}c_{ik}\lambda_i(Cv_i-\Sigma^{-1} Cv_i) + \epsilon_{3k},\\
                    &=C(\sum_{i=1}^{u}c_{ik}\lambda_i v_i)-\Sigma^{-1}C(\sum_{i=1}^{u}c_{ik}\lambda_i v_i), k=0,1,2,\dots.
\end{align*}

Hence, we consider $\tilde{v}=\sum_{i=1}^{u}c_{ik}\lambda_i v_i$ as a linear combination of all the unstable eigenvectors, the conclusion holds with the coding matrix $\Sigma$ satisfying all the constraints. 
\end{proof}
\begin{remark}
When the attacker is able to learn $\Sigma$ by analyzing sensor outputs and actuator inputs, the system can send a new $\Sigma$ before the attacker figures out the current applied coding matrix. The process of learning $\Sigma$ from the perspective of an attacker will be discussed in Section~\ref{Sig_t}.
\end{remark}
\subsection{When sensor and actuator packets are both injected}
\label{sec:actuator_inject}
We will derive the condition for a feasible coding matrix when the attacker can mount deception attacks to both sensor packets and actuator packets. 
\begin{theorem}
Given an attacked system model~\eqref{attackmodel}, assume that the attacker designs a sequence of stealthy sensor and actuator data injection $(y^a_k, u^a_k), k=0,1,\dots$, that $u^a_k$ is bounded and drives the estimation error to infinity $\lim_{k\to\infty}\|\Delta e_k\|_2 \to \infty$. If there exists an invertible matrix $\Sigma$ such that $y_k^a-\Sigma^{-1} y_k^a\neq 0$ for any $y_k^a$, 
then after injecting $(y^a_k, u^a_k)$ the estimation residue change satisfies $\lim_{k\to\infty}\|\Delta z'_k\|_2 \to \infty$, by coding sensor outputs~\eqref{sig_y} with $\Sigma$. 
\end{theorem}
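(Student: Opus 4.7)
My plan is to parallel the argument of Theorem~\ref{code}, extending it to accommodate the simultaneous actuator injection $u^a_k$. First, I would derive the coded dynamics by substituting the decoded sensor injection $\Sigma^{-1}y^a_{k+1}$ into the Kalman filter recursion while leaving the actuator term $u^a_k$ unchanged (since actuators are not coded); the resulting recursions are
\begin{align*}
\Delta e'_{k+1} &= (A-KCA)\Delta e'_k - K\Sigma^{-1}y^a_{k+1} + (B-KCB)u^a_k,\\
\Delta z'_{k+1} &= CA\Delta e'_k + \Sigma^{-1}y^a_{k+1} + CBu^a_k.
\end{align*}

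Next, I would extract structural information from the stealthy hypothesis on the uncoded system. Because $\|\Delta z_{k+1}\|_2\le M$ and $u^a_k$ is bounded, equation~\eqref{delta_z} forces $y^a_{k+1} = -CA\Delta e_k - CBu^a_k + \epsilon_k$ with $\|\epsilon_k\|_2$ uniformly bounded. Plugging this back into the $\Delta e_k$ recursion collapses it to $\Delta e_{k+1} = A\Delta e_k + (\text{bounded})$, so the unbounded part of $\Delta e_k$ must lie in the unstable eigenspace of $A$. Writing $\Delta e_k = c_k v + \epsilon_{1k}$ along an unstable eigenvector $v$ with $|c_k|\to\infty$ reproduces the structural form $y^a_{k+1} = -c_k\lambda Cv + \epsilon_{2k}$ of equation~\eqref{ya}, and in particular $\|y^a_k\|_2\to\infty$.

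Then I would compare the coded and uncoded residues. Subtraction gives $\Delta z'_{k+1} - \Delta z_{k+1} = CA\delta_k + (\Sigma^{-1}-I)y^a_{k+1}$, where $\delta_k := \Delta e'_k-\Delta e_k$ satisfies $\delta_{k+1} = (A-KCA)\delta_k + K(I-\Sigma^{-1})y^a_{k+1}$. Because $(A-KCA)$ is strictly stable, $\delta_k$ is at most of the same asymptotic order as the driving input $(I-\Sigma^{-1})y^a_{k+1}$. The hypothesis $y^a_k \neq \Sigma^{-1}y^a_k$ guarantees that $(I-\Sigma^{-1})y^a_{k+1}$ inherits the unbounded growth established in the preceding step, so the dominant asymptotic component of $\Delta z'_{k+1}$ along the direction $(I-\Sigma^{-1})Cv$ diverges. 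Since $\|\Delta z_{k+1}\|_2\le M$, the triangle inequality then yields $\|\Delta z'_k\|_2\to\infty$.

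The main obstacle is ruling out an exact asymptotic cancellation between $CA\delta_k$ and $(\Sigma^{-1}-I)y^a_{k+1}$, both of which can grow at the same rate $|\lambda|^k$. I expect to resolve this by projecting the coupled recursion onto a Jordan basis of $A$ and isolating the coefficient of the unstable mode $v$ in $\Delta z'_k$ via the particular solution $\delta_k \approx c_k(\lambda I-A+KCA)^{-1}K(\Sigma^{-1}-I)Cv$. The spectral gap between the unstable eigenvalue $\lambda$ of $A$ and the strictly stable spectrum of $A-KCA$, together with the hypothesis $y^a_k\neq \Sigma^{-1}y^a_k$, should ensure the surviving leading coefficient is nonzero and hence that the residue truly blows up.
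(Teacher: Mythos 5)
Your proposal follows essentially the same route as the paper's proof: both use the stealthiness bound $\|\Delta z_{k+1}\|_2\le M$ to force $y^a_{k+1}=-CA\Delta e_k-CBu^a_k+\epsilon_k$ with $\|y^a_k\|_2\to\infty$, and both reduce the coded residue to $\Delta z'_{k+1}=CA(\Delta e'_k-\Delta e_k)+(\Sigma^{-1}-I)y^a_{k+1}+\epsilon_k$, where $\Delta e'_k-\Delta e_k$ is driven by $K(I-\Sigma^{-1})y^a_j$ through the stable matrix $A-KCA$. The asymptotic-cancellation issue you flag at the end is genuine, but the paper's own proof does not address it either---it simply asserts divergence from $\Sigma\neq I$ and $\|y^a_{k+1}\|_2\to\infty$---so on this point your treatment is, if anything, more careful than the published argument.
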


\begin{proof}
The dynamics of change of estimation error, residuals between the normal and compromised system is described as~\eqref{delta_z},
where $y^a_k, u_k^a$ is the injected sequence to sensor and actuator packets, respectively. Since $\|\Delta z_{k+1}\|_2 \leq M$ for all $k=0, 1,\dots$, any pair of $(y^a_{k+1}, u^a_k)$ must satisfy 
\begin{align}
y^a_{k+1}=-CA\Delta e_k - CBu^a_k+ \epsilon_k,\  \|\epsilon_k\|_2 \leq M.
\label{yak}
\end{align}
For bounded $u^a_k$, the injection sequence satisfies that $\lim_{k\to\infty}\|y^a_k\|_2 \to \infty$ to make sure $\lim_{k\to\infty}\|\Delta e_k\|_2 \to \infty$. When coded sensor values are injected as~\eqref{sig_ya}, and the estimator decodes the value as
\begin{align*}
\tilde{y}'_k=\Sigma^{-1} Y'_k=Cx_k+v_k+\Sigma^{-1} y^a_k,
\end{align*}
the coded system with the original design of Kalman filter is equivalent to be injected by a sequence of pair $(\Sigma^{-1}y^a_{k+1}, u^a_k)$. It is worth noting that the actuator data is not coded, and $u^a_k$ keeps the same for both the original and coded system.
The dynamics of the change of estimation error, residuals between the normal and compromised coded system are as following
\begin{align*}
\begin{split}
&\Delta e'_{k+1} = (A-KCA) \Delta e'_{k} - K\Sigma^{-1}y^{a}_{k+1}+(B-KCB)u_k^a,\\
&\Delta z'_{k+1}=CA\Delta e'_k+ \Sigma^{-1}y^a_{k+1}+C B u^a_k.
\end{split}
\end{align*}
Without loss of generality, we assume that $\Delta e_0 =0$, then 
\begin{align*}
\Delta e_k &=\sum_{j=1}^{k} (A-KCA)^{k-j} (- K y^{a}_{j}+(B-KCB)u_{j-1}^a),\\
\Delta e'_k &=\sum_{j=1}^{k} (A-KCA)^{k-j} (- K\Sigma^{-1}y^{a}_{j}+(B-KCB)u_{j-1}^a).
\end{align*}
Plug in the expression of $\Delta e'_k$ in the equation of $\Delta z'_{k+1}$, with $CBu^a_k=-y^a_{k+1}-CA\Delta e_k + \epsilon_k$, we have
\begin{align}
\begin{split}
\Delta z'_{k+1}=& CA\sum_{j=1}^{k} (A-KCA)^{k-j} (- K\Sigma^{-1}y^{a}_{j}\\
                      & +(B-KCB)u_{j-1}^a)+\Sigma^{-1}y^a_{k+1}+C B u^a_k\\
                   =&CA\sum_{j=1}^{k} (A-KCA)^{k-j}K(I-\Sigma^{-1})y^a_j\\
                     &+(\Sigma^{-1}-I) y^a_{k+1}+ \epsilon_k.
\end{split}
\label{z_uy}                                        
\end{align}
Hence, for $\Sigma \neq I$,  $\lim_{k\to\infty}\|y^a_{k+1}\|_2 \to \infty$, we have $\lim_{k\to\infty}\|\Delta z'_k\|_2 \to \infty$ for $\Delta z'_k$ defined in~\eqref{z_uy}.
\end{proof}


\section{Algorithm to Compute A Coding Matrix}
\label{algorithm}
In this section we propose an algorithm to compute a set of feasible coding matrices for the case there exists a sequence of sensor data injections to cause unbounded state estimation error, i.e., the system has unstable eigenvectors of $A$.

The coded sensor values should increase the difference between estimation residue of the normal and attacked system -- $\|\Delta z'_k\|_2$ as $k \to \infty$, which is equivalent to keep $\|Cv-\Sigma^{-1} Cv\|_2$ or $\|C\tilde{v}- \Sigma^{-1}C\tilde{v}\|_2$ for multiple unstable eigenvectors nonzero, by the proof of Theorem~\ref{code} and Lemma~\ref{code_lemma}. The system satisfies that $(A, C)$ is detectable, then with an invertible coding matrix $\Sigma$ and the decoded sensor value $\tilde{y}'_k$ defined in~\eqref{y_decode}, $\tilde{y}'_k=y_k$ when $y^a_k=0$. Hence, the state estimator still converges to the true state without attacks and the coding scheme does not sacrifice the performance of state estimator.

For multiple unstable eigenvectors, when we do not know the exact linear combination result of $\tilde{v}$ applied by the attacker to design the injection sequence, we can not guarantee that $\Sigma$ works for the exact injected sequence $y^a_k$ by finding a feasible coding matrix with respect to a specific vector $v$. According to Theorem~\ref{code} and Lemma~\ref{code_lemma}, the coding matrix should work for any possible injection sequence $y^a_k$ designed based on unstable eigenvectors of the system matrix $A$. Hence, we consider to find a coding matrix based on the concept of a rotation matrix without specific knowledge about the value of injected data to sensors.
 


\begin{definition}
A Givens rotation is a $n\times n$ rotation matrix, with $1$'s on the diagonal, $0$'s elsewhere, except the intersections of the $i$th and $j$th rows and columns corresponding to a rotation in the $(i,j)$ plane in $n$ dimensions. It takes the following form
\begin{align}
G(i,\ j,\ \theta)=\begin{bmatrix}1&\cdots&0&\cdots&0&\cdots&0\\ \vdots &\ddots &\vdots& &\vdots& &\vdots \\ 
0&\cdots & c &\cdots & -s & \cdots &0 \\ \vdots & &\vdots&\ddots&\vdots& &\vdots \\
0&\cdots&s&\cdots&c&\cdots&0 \\ \vdots& &\vdots& &\vdots&\ddots&\vdots \\ 0&\cdots&0&\cdots&0&\cdots&1
\end{bmatrix},
\label{grotate}
\end{align}
where $c=\cos \theta$, $s=\sin \theta $. 
\end{definition}

The product $G(i,j,\theta) x$ represents a counterclockwise rotation of the vector $x\in \mathbb{R}^p$ in the $i,j$ plane of $\theta$ radians. Hence, only the $i$-th and $j$-th elements of $x$ will be changed.  Given system model~\eqref{system}, there are multiple ways to choose a rotation matrix as a coding matrix in general. If a rotation matrix can guarantee that the direction of any possible stealthy injection is changed, it must rotate all nonzero elements in the vector space
\begin{align}
span(Cv_1, \dots, Cv_u)
\label{inject_space}
\end{align}

The following algorithm provides a design process of a rotation matrix given system matrix $A$.
\begin{algorithm}
\caption{\textbf{: Compute a feasible coding matrix $\Sigma$}}
\textbf{Input}: System model parameters $A,C$, unstable eigenvalues and eigenvectors $\lambda_i, v_i,\  i=1,\dots, u$ of $A$.
\\\textbf{Initialization}: Calculate vectors $Cv_i \in \mathbb{R}^p$ for all unstable eigenvectors $i=1,\dots, u$. Construct the standard basis $(e_{p_1},\ e_{p_2}, \dots, e_{p_l})$, $e_{p_j} \in \mathbb{R}^p$ for the vector space defined as~\eqref{inject_space}, where $1\leq p_1 < p_2 \dots < p_l \leq p$, and $e_{p_j}$ is a vector with the $p_j$-th element as $1$ and all the other elements as $0$. Define rotation step as $r=1$, uncovered unstable dimension set as $S=\{p_1, p_2,\dots, p_l\}$.
\\\textbf{Iteration}:  When $S\neq \phi $\\
If more than two elements are left in the set $S$: randomly picking up a rotation radian $\theta \in (0, \frac{\pi}{2}]$, rotation dimension $p_i, p_j \in S$, let $S=S\setminus \{p_i, p_j\}$;\\
Else: randomly picking  a rotation radian $\theta \in (0, \frac{\pi}{2}]$ with uniform distribution, rotation dimension $p_i\in S$, $ p_j \in \{1,\dots,p \}$ and $p_j \neq p_i$, let $S=S\setminus \{p_i\}$. \\
Get the rotation matrix $G_r=G(p_i,p_j,\theta)$ as defined in~\eqref{grotate}.
Let $r=r+1$.
\\\textbf{Return}: A feasible transform matrix $\Sigma = G_1 G_2 \dots G_r$. 
\label{calculate_g}
\end{algorithm}

The existence condition of a feasible coding matrix designed as a rotation matrix is then explained in the following lemma. 
\begin{lemma}
When the dimension of matrix $C$ of the system~\eqref{system} satisfies that $p \geq 2$, there always exists a feasible givens rotation matrix $\Sigma$ that satisfies the condition of Theorem~\ref{code} or Lemma~\ref{code_lemma} for the system. 
\label{exist}
\end{lemma}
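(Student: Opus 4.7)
The plan is to verify that the matrix $\Sigma = G_1 G_2 \cdots G_r$ produced by Algorithm~\ref{calculate_g} satisfies the hypothesis of Lemma~\ref{code_lemma}; since the algorithm terminates in finitely many steps whenever $p \ge 2$, this will establish existence. The first step is a reformulation. Each Givens factor is orthogonal, so $\Sigma$ is orthogonal and $\|\Sigma w\|_2 = \|w\|_2$ for every $w$. By the equality case of Cauchy--Schwarz, the cosine ratio $\frac{(C\tilde v)^{T} \Sigma C\tilde v}{\|\Sigma C\tilde v\|_2 \|C\tilde v\|_2}$ equals $1$ if and only if $\Sigma C\tilde v = C\tilde v$. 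Hence it suffices to show that $\Sigma w \neq w$ for every nonzero $w$ in $V := \mathrm{span}(Cv_1, \dots, Cv_u)$, i.e., $V \cap \ker(\Sigma - I) = \{0\}$.

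Next I would translate the algorithm's choices into coordinates. Interpreting $S = \{p_1, \dots, p_l\}$ as the set of indices on which $V$ has nontrivial projection, every $w \in V$ has zero entry on every coordinate outside $S$, and if a $w \in V$ additionally vanishes on all of $S$ then $w = 0$. Each Givens factor $G(i, j, \theta)$ with $\theta \in (0, \pi/2]$ is the identity on $\mathrm{span}(e_k : k \notin \{i, j\})$ and a nontrivial $2 \times 2$ rotation on $\mathrm{span}(e_i, e_j)$; in particular it has no nonzero fixed vector in its plane because $\sin\theta \neq 0$. When $l$ is even, the outer loop pairs every index in $S$ into Givens factors acting on pairwise-disjoint coordinate planes, so $\Sigma$ is an orthogonal direct sum of $l/2$ nontrivial planar rotations and the identity on the complement; a fixed $w \in V$ must therefore vanish on every such pair, hence on all of $S$, forcing $w = 0$. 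When $l$ is odd, the first $(l-1)/2$ factors still act on disjoint coordinate pairs inside $S$, and the final factor $G(p_i, p_j, \theta)$ exists because $p \ge 2$ provides some $p_j \neq p_i$; I would propagate vanishing inductively from the earlier disjoint blocks, after which the last block's fixed-point equation restricted to the $(p_i, p_j)$-plane reduces to $\sin\theta \cdot w_{p_i} = 0$, giving $w_{p_i} = 0$.

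The main technical obstacle is the odd subcase when the auxiliary index $p_j$ selected for the last factor coincides with an index that was already paired in an earlier factor, so $\Sigma$ no longer decomposes as a clean direct sum. To handle this I plan to use that Givens factors on disjoint index sets commute, rearranging $\Sigma$ so the shared coordinate appears only in the first and last factor, then computing the composite $2 \times 2$ action on the coupled pair explicitly; because all angles lie in $(0, \pi/2]$ the entries $\cos\theta$, $\sin\theta$ are strictly between $0$ and $1$ and a short verification shows the composite block still has trivial fixed subspace on any vector whose other coordinates are already constrained by the preceding argument. The hypothesis $p \ge 2$ is used both to guarantee that any Givens construction involves two distinct coordinates and to supply the auxiliary index in the odd branch, so the algorithm always outputs an invertible coding matrix obeying the hypothesis of Lemma~\ref{code_lemma}.
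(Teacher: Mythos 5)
Your opening reductions are sound and are in fact more careful than the paper's own proof, which disposes of the whole lemma in one sentence ("since every rotation has an angle $\theta\in(0,\frac{\pi}{2}]$ and there are no two rotations in the same plane, $\Sigma Cv$ is not in the same direction with $Cv$"). In particular, using orthogonality of $\Sigma$ and the equality case of Cauchy--Schwarz to reduce condition~\eqref{Sigma_c} to $V\cap\ker(\Sigma-I)=\{0\}$ is correct, and your disjoint-planes argument (an orthogonal direct sum of nontrivial planar rotations has no nonzero fixed vector supported on the rotated coordinates) completely settles the case where Algorithm~\ref{calculate_g} happens to choose pairwise-disjoint planes covering $S$. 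The problem is the overlapping case, which you correctly identify as the main obstacle but cannot close as planned. If the final factor $G(p_i,p_j,\theta_r)$ shares the coordinate $p_j$ with an earlier factor $G(a,p_j,\theta_1)$, the composite acts on the three-dimensional coordinate subspace $\mathrm{span}(e_a,e_{p_j},e_{p_i})$ as an element of $SO(3)$, and every element of $SO(3)$ has a fixed axis; the promised "short verification that the composite block still has trivial fixed subspace" is therefore impossible. Nor can you first propagate vanishing from the earlier block: the block containing the shared coordinate is precisely the one that is no longer decoupled from the last factor. A concrete failure: with $p=3$, $S=\{1,2,3\}$ and the admissible choices $G_1=G(1,2,\pi/2)$, $G_2=G(3,1,\pi/2)$, the product $\Sigma=G_1G_2$ fixes $(-1,1,1)^T$, so if some $Cv$ (or $C\tilde v$) is proportional to $(-1,1,1)^T$ the cosine in~\eqref{Sigma_c} equals $1$ and the algorithm's output is infeasible.

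This is not merely a presentational gap in your write-up; it exposes a real defect in the lemma and its proof that the paper glosses over. Any product of Givens rotations lies in $SO(p)$, and for odd $p$ every element of $SO(p)$ has a $+1$ eigenvector; hence if $\mathrm{span}(Cv_1,\dots,Cv_u)=\mathbb{R}^p$ with $p$ odd, no product of Givens rotations can satisfy the condition of Lemma~\ref{code_lemma}, and the existence claim fails outright in that regime. When $\dim\mathrm{span}(Cv_1,\dots,Cv_u)<p$, or when $l$ is even so that all planes can be chosen disjoint inside $S$, existence does hold, but only via your disjoint-planes argument or via an additional argument that the fixed axis of the coupled $SO(3)$ block can be steered off $V$ by a suitable (not arbitrary) choice of angles and planes --- neither of which is supplied by the algorithm's "randomly pick $\theta\in(0,\pi/2]$" step. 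In short: your proof is correct and complete exactly where the rotation planes are disjoint, and the remaining case cannot be completed as stated because the claim itself does not hold there without further hypotheses.
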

\begin{proof}
According to the definition of a Givens matrix~\eqref{grotate} and the process of calculating a feasible rotation matrix,  when $p \geq 2$, we apply Algorithm~\ref{calculate_g}. Since every rotation has an angle $\theta \in (0,\ \frac{\pi}{2}]$ and there are no two rotations in the same plane, vector $\Sigma Cv$ is not in the same direction with $Cv$. Hence, Algorithm~\ref{calculate_g} provides a feasible rotation matrix.
\end{proof}

The coding scheme proposed in this work is a low cost approach from computation perspective. Specifically, the proposed coding scheme requires only $O(n^3+p^3)$ multiplications and additions, where n and p denote the number of plant states and sensors respectively. As we clarify now in the new version of the manuscript (in Section IV), this is significantly lower than the computation cost for even basic encryption and coding schemes that involve computation of highly complex non-linear primitives~\cite{foundation_encrypt,encrypt_sensor,correct_code1977}.

The coding scheme proposed in this work is also a low cost approach from communication perspective. The coding scheme proposed in this work does not require additional bits for each plaintext message of the sensor measurements, while an encryption method introduces communication overhead for each sensor message transmitted in the communication channel~\cite{encrypt_key}. The sensor outputs coding approaches proposed in this work aim to change the value transmitted over the communication channel instead of correcting errors on bit level compared with error-correcting additional coding bits~\cite{correct_code1977}. Hence, the communication overhead of the proposed scheme in this work is relatively low.

\begin{remark}
The rotation matrix $\Sigma$ calculated by Algorithm~\ref{calculate_g} is a sparse matrix in general, since a rotation matrix has many $0$ elements, and Algorithm~\ref{calculate_g} is a polynomial heuristic algorithm. This means the coding process is computationally efficient.
\end{remark}

For systems with structural constraints, two potential schemes can be considered. One is that the structure of $\Sigma$ is also limited and we design a coding matrix $\Sigma$ with an additional constraint that some components $\Sigma$ must be $0$ because of the sparsity of the sensors the system equipped with. Another scheme is distributed coding that multiple coding matrices are applied for the whole system. This is a revenue for future work.

\section{Time-Varying Coding Scheme When the Attacker Estimates the Coding Matrix}
\label{Sig_t}
The coding scheme in this work is effective for the cases that sensor values are not manipulated by the attacker before they are coded by matrix $\Sigma$. 
We also assume that the attacker does not know when the system starts to apply $\Sigma$ for transforming sensor output values, and aims to inject a stealthy sequence $y^a_k$ to the sensor communication channel with respect to the original system. If the attacker is powerful enough to update the system model and acquire the knowledge of the coding design after some time steps, the system should constantly apply a time-varying coding scheme, and the time length for updating the coding matrix depends on the learning ability of the attacker and detecting requirements of the system.

Each time the system updates the coding matrix, it will cost the attacker some time to figure out the transformed sensor outputs values. Since it is sufficiently fast to compute a feasible transform based on the algorithm, the system can even generate new coding matrices during the running process. Before the attacker learns $\Sigma$ or the coded observer parameter $\Sigma C$, the false data injection sequence is not stealthy for the coded system. We assume that the attacker cannot directly acquire the coding matrix during its communication process, similar as the secrecy requirement of a key for encryption sensor nodes~\cite{encrypt_key, encrypt_sn}. We assume that the sensors and controller are synchronized, which is a standard assumption in safety-critical control systems. Thus, with the same notion of time, both sensors and the controller can use the same random generator to (re)generate the coding matrix or exploit some of the existing schemes for secret key distribution. In addition, they will be able to synchronously switch from using one matrix to the newly created/obtained ones. Various protocols of key distributions have been proposed according to the properties of the systems~\cite{encrypt_key, key_protocol}.
\subsection{The time length an attacker needs to learn $\Sigma$}
To learn the matrix $\Sigma$ that distributed secured between sensors and the controller/estimator/detector, we assume that the attacker is able to eavesdrop the sensor outputs and actuator inputs via the communication channel for estimating $\Sigma$, instead of directly capturing the matrix $\Sigma$. Since $y^a_k$ is designed by the attacker, the sensor information received by the attacker is then the true sensor measurements under the coding scheme $Y_k=\Sigma y_k$. System dynamics from the perspective of an attacker are 
\begin{align}
\begin{split}
x_k=&A^k x_0+\sum_{j=0}^{k-1} A^{k-j-1}(Bu_j+w_j),\\
Y_k=&\tilde{C} A^k x_0+\sum_{j=0}^{k-1} \tilde{C}A^{k-j-1}(Bu_j+w_j) +y^a_k+v_k,
\end{split}
\label{at_ss}
\end{align}
where $\tilde{C}=\Sigma C$.  
When the attacker does not have any knowledge about the structure of the coding matrix $\Sigma \in \mathbb{R}^{p \times p}$, there are $p^2$ variables for estimating $\Sigma$. Meanwhile, in general initial state $x_0$ can only be acquired via estimation, and there are $n$ variables additionally in~\eqref{at_ss}. 
Without loss of generality, we initialize $k=0$ as the time that attacker starts to observe the system's sensor outputs and actuator inputs to update the knowledge of the system coding scheme. It is worth noting that for designing a sequence of stealthy injection data, the attacker needs to know the model of the system, including the estimator and statistics detector, while the values of sensor outputs or actuator inputs are not necessary for the attacker. When the attacker starts to record sensor and actuator communicational packets at an arbitrary time $k$, the corresponding system state $x_0$ can not be directly retrieved by the attacker. Hence, $\Sigma \in \mathbb{R}^{p \times p}$ and $x_0 \in \mathbb{R}^n$ are variables to be estimated.

We examine a simpler case to estimate the the coding matrix first---how many steps of sensor values the attacker need to measure for the following noise-free LTI system 
\begin{align}
\begin{split}
x_{k+1}=&Ax_k+Bu_k,\quad
\bar{y}_k=Cx_k.
\end{split}
\label{nf_sys}
\end{align}
The sensor outputs coded by $\Sigma$ at time $k$ are
\begin{align}
\bar{Y}_k=\tilde{C} A^k x_0+\sum_{j=0}^{k-1} \tilde{C}A^{k-j-1}Bu_j,\quad \bar{Y}_k \in \mathbb{R}^{p}.
\end{align}
We define the attacker's observation $Y_{\Sigma,N}$ during time $k=0,1,\dots, N$ when the system applies $\Sigma$, and the corresponding noise-free measurements $\bar{Y}_{\Sigma, N}$ as
\begin{align*}
Y_{\Sigma, N}=\begin{bmatrix}Y_0|Y_1|\cdots|Y_N\end{bmatrix},\quad
\bar{Y}_{\Sigma, N}=\begin{bmatrix}\bar{Y}_0|\bar{Y}_1|\dots|\bar{Y}_N \end{bmatrix}.
\end{align*}
The observed sensor values from the perspective of the attacker are bilinear equations with respect to $\Sigma$ and $x_0$. Consider the noise-free dynamics of sensor measurements as the following
\begin{align}
\bar{Y}_{\Sigma,N}=\Sigma C \begin{bmatrix} x_0&\cdots&A^N x_0+\sum_{j=0}^{N-1} A^{N-j-1}Bu_j\end{bmatrix},
\label{YN}
\end{align}
where $[\bar{Y}_1,\dots, \bar{Y}_N],\quad [u_0,\dots, u_N]$ is eavesdropped by the attacker and $(A,B,C)$ is within the knowledge space of the attacker. To write the above equation as a standard form of bilinear equations regarding to vectors, we denote the coding matrix $\Sigma$ as
\begin{align*}
\Sigma=\begin{bmatrix} \Sigma_{11} & \Sigma_{12}&\cdots &\Sigma_{1p}\\ \vdots &\vdots &\ddots & \vdots \\ \Sigma_{p1} & \Sigma_{p2} & \cdots & \Sigma_{pp} \end{bmatrix}=\begin{bmatrix}\sigma_1\\ \vdots \\ \sigma_{p} \end{bmatrix},
\end{align*}
where $\sigma_i \in \mathbb{R}^{1 \times p},\  i \in\{1,\dots, p\}$ is the $i$-th row of matrix $\Sigma$.
We also vectorize $\bar{Y}_{\Sigma, N} \in \mathbb{R}^{p \times (N+1)}$ $(Y_{\Sigma, N})\in \mathbb{R}^{p \times (N+1)}$ as $\bar{d} \in \mathbb{R}^{p(N+1)}$ 
$(d \in \mathbb{R}^{p(N+1)})$
\begin{align}
\begin{split}
&\text{vec}(\bar{Y}_{\Sigma, N}) =\begin{bmatrix}[\bar{Y}_0]_1\\ \vdots \\ [\bar{Y}_0]_p \\ \vdots \\ [\bar{Y}_N]_p\end{bmatrix}
=\begin{bmatrix}\bar{d}_1 \\ \vdots \\ \bar{d}_p \\ \vdots \\ \bar{d}_{p(N+1)}\end{bmatrix} \in \mathbb{R}^{p(N+1)},
\end{split}
\label{vecY}
\end{align} 
where $[\cdot]_j$ means the $j$-the element of a vector, and $\bar{d}_i \in \mathbb{R}$. 
Then equation~\eqref{YN} can be written as the following $p(N+1)$ equations 
\begin{align}
\begin{split}
&\sigma_i C x_0=[\bar{Y}_0]_i=\bar{d}_i,\\
&\sigma_i (CA^k) x_0 + \sigma_i \left(C\sum_{j=0}^{k-1} A^{k-j-1}Bu_j\right) = [\bar{Y}_k]_i=\bar{d}_{pk+i}, \\
\end{split}
\label{bilinear}
\end{align}
In particular, define coefficient matrices 
\begin{align*}
\begin{split}
T_0=C,\  T_{k}= CA^k, 
S_0=0,\  S_{k}=C\sum_{j=0}^{k-1} A^{k-j-1}Bu_j, 
\end{split}
\end{align*}
for $k=1,\dots, N$.  
For the case of a noise-free system, the attacker is possible to solve the bilinear problem~\eqref{bilinear} only after observing enough time steps of $\bar{Y}_k$.
\begin{remark}
By the property of bilinear equations~\cite{bilinear}, the attacker needs at least $N \geq \text{max} \{n, p\}-1$ measurements of sensor and actuator values to calculate the exact coding matrix $\Sigma$ and true initial state $x_0$ when there is no noise.
\end{remark}

With noises in practical, we have
\begin{align}
\begin{split}
&\sigma_i C x_0 + [v_0]_i=[Y_0]_i=d_i,\\
&\sigma_i (CA^k) x_0 + \sigma_i \left(C\sum_{j=0}^{k-1} A^{k-j-1}(Bu_j+ w_j)\right)+[v_k]_i \\
=& [Y_k]_i=d_{pk+i},\quad i=1,\dots, p,\ k=1,\dots, N,
\end{split}
\label{equal_bilinear}
\end{align}
where $[Y_k]_i$ and $d_i$ are defined similar as $[\bar{Y}_k]_i$ an $\bar{d}_i $  in vectorization~\eqref{vecY}. Under the assumption that both $w_k$, $v_k$ are i.i.d. Gaussian noise, for any $k$, their expectations satisfy
 \begin{align*}
& \mathbb{E}\left [ \sigma_i (C\sum_{j=0}^{k-1} A^{k-j-1} w_j)+[v_k]_i \right]=0.
 \end{align*}
Then the noise-free and noisy sensor values satisfy that $ \mathbb{E} Y_{\Sigma, N} = \bar{Y}_{\Sigma, N}.$
 
 Hence, when the attacker observes noisy sensor outputs $Y_{\Sigma, N}$, the objective of retrieving the coding matrix $\Sigma$ without the knowledge of $x_0$ is equivalent to finding $\sigma_1,\dots, \sigma_p, x_0$ that fit for the noise-free equation set~\eqref{bilinear}. With even Gaussian noise, it becomes difficult to numerically find an exact solution of the true coding matrix, and the problem is then to minimize the total error between the left and right sides of the equations. The problem of estimating $\Sigma, x_0$ is described as Problem~\ref{estlemma}.
   
 \begin{problem}
 The problem of estimating $\Sigma, x_0$ in the minimum mean square error perspective is defined as
 the following bilinear programming problem
 \begin{align}
 \begin{split}
 \underset{\sigma_1,\dots, \sigma_p, x_0}{\text{minimize}}\quad &\sum_{k=0}^{N} \sum_{i=1}^{p}\left\|\sigma_i T_k x_0 + \sigma_i S_k- d_{pk+i} \right\|_2\\
\text{subject  to}\quad  
                                        & \text{rank}(\Sigma = \begin{bmatrix}\sigma_1 \\ \vdots \\ \sigma_p\end{bmatrix})=p.  
 \end{split}
 \label{est_bil}
 \end{align}
\label{estlemma}
\vspace{-8pt}
 \end{problem}
When there exists an invertible matrix $\Sigma$ that satisfies the equations defined in~\eqref{equal_bilinear}, the above bilinear optimization problem~\eqref{est_bil} has an optimal cost $0$. However, the optimal solution $\Sigma^*$ does not need to be the true coding matrix $\Sigma$, since there is noise and the objective function of problem~\eqref{est_bil} does not include noise of each time step.
 
The rank constraint of problem~\eqref{est_bil} is non-convex, and in practice the attacker does not know how many measurements return the best estimation before calculating $\Sigma^*$ given all existing measurements. Hence, we design the following heuristic algorithm for the attacker, which ignores the rank constraint first, and checks whether $\Sigma$ is full rank every step till a feasible solution is reached. 
\begin{algorithm}
\caption{Algorithm of estimating $\Sigma$ for the attacker}
\textbf{Inputs:} {System's parameter $(A,B,C)$, design of Kalman Filter $K$, the threshold $\alpha$ of $\chi^2$ detector, algorithm stopping condition--estimation error $\epsilon$.}\\
\textbf{Initialization:} {Initialize the value of estimation error $Er > \epsilon$, and the estimation of coding matrix $\hat{\Sigma}$ as a  n identical matrix.}\\
\textbf{While $Er > \epsilon$ or $\hat{\Sigma}=I$.}\\
1). Read one new sensor and actuator observation, and update parameters of problem~\eqref{est_bil}; \\
2). Solve problem~\eqref{est_bil}. If the optimal solutions $\sigma^*_1, \dots, \sigma^*_p$ satisfy the full rank constraints, let $Er$ be the value of the optimal cost, and $\hat{\Sigma}= \Sigma^*=\begin{bmatrix}(\sigma^*_1)^T  \dots  (\sigma^*_p)^T\end{bmatrix}^T$.\\
\textbf{Return:} {Estimation result of $\hat{\Sigma}$.} 
\label{attacker_alg}
\end{algorithm}
\begin{remark}
It is worth noting that a bilinear equation usually has multiple solutions, and Algorithm~\ref{attacker_alg} returns different optimal solutions $\hat{\Sigma}$ under different sensor and actuator measurements time $N$. Under this situation, it is not clear for the attacker to decide how many time steps to measure and which optimal solution to choose, even when the optimal cost of problem~\eqref{est_bil} is $0$. Even for a simple two dimensional system $A$, multiple solutions exist and do not converge to one estimation after $20$ steps of measurements, which we will show in simulation.
\end{remark}

To summarize, there are two main challenges for the attacker to estimate the true coding matrix, the first one is because multiple solutions exist for bilinear equations or bilinear optimization problems. The second one comes from the noise in the communication channel, that even the attacker find a feasible solution to the bilinear equation set~\eqref{equal_bilinear}, it is only an unbiased estimation instead of the true coding matrix. 

\subsection{When the estimated $\hat{\Sigma} \neq \Sigma$}
After the attacker estimates a coding scheme $\hat{\Sigma}$ and considers it as the true coding matrix the system is applying, the easiest way to keep stealthy is to inject $\hat{\Sigma} y_k^a,$ where $y_k^a$ is a stealthy data injection designed for the original system without coding. However, as discussed above, when there exists noise, the attacker is not able to calculate the exact coding matrix the system is applying. When $\hat{\Sigma}\neq \Sigma$, the injection sequence $\hat{\Sigma} y_k^a$ can only extend the time length before detected and cannot pass the detector. Then the system needs to evaluate how long the attacker needs to measure the sensor outputs and how long the attacker can stay stealthy by applying a new injection sequence, in order to decide the time length of changing the coding matrix.

\begin{definition}
An estimated coding matrix $\hat{\Sigma}$ calculated by the attacker is called a feasible estimation of $\Sigma$ that keeps the attacker stealthy for time $k=0, \dots,T$ while causing $e$ error, if and only if  for all sequence of injections $\hat{\Sigma}y_k^a$ designed by the attacker according to the estimated coding matrix $\hat{\Sigma}$,  the dynamics of $\Delta e'_k, \Delta z'_k$ satisfy that
\begin{align}
\begin{split}
&\Delta e'_{k+1}=(A-KCA )\Delta e_{k} - K\Sigma^{-1} \hat{\Sigma} y^{a}_{k+1},\\
&\Delta z'_{k+1}=CA\Delta e_k+ \Sigma^{-1} \hat{\Sigma} y^a_{k+1}, \\
&\underset{k=1,\dots, T}{\text{max}}\|\Delta e_k\|_2 \geq e, \  \underset{k=1,\dots, T}{\text{max}}\|\Delta z_k \|_2 \leq M,
\end{split}
\end{align} 
where $\Sigma$ is the true coding matrix the system is applying.
\end{definition} 
 
 Define the time length of keeping stealthy with injection sequence $y^a_k, k=1,\dots, T_s,$ for a system~\eqref{system} as
 \begin{align}
 T_s (y^a_k) =\underset{k}{\text{inf}}\{k: \|\Delta z_k (y^a_k) \|_2 > M\}.
 \end{align}
 The attacker increases the time length of keeping stealthy when $T_s(\hat{\Sigma}y^a_k) > T_s (y^a_k)$.  However, the attacker does not have a guarantee about $T_s(\hat{\Sigma}y^a_k)$ without the knowledge of the true coding matrix, since $\Delta z_k(\hat{\Sigma}y^a_k)$ is affected by both $\Sigma$ and $\hat{\Sigma}$. There exists a trade-off between the time $N$ the attacker takes to measure sensor and actuator values to estimate a better $\hat{\Sigma}$ and the time the attacker starts to apply a new injection sequence $\hat{\Sigma} y^a_k$. If the measuring time $N$ is large, it is possible that the system already triggers the alarm before the attacker successfully recovers the coding scheme. If the attacker does not have enough measurements for a good estimation and then applies the estimated $\hat{\Sigma}$ to design a new injection sequence $\hat{\Sigma}y^a_k$, $T_s(\hat{\Sigma}y^a_k)$ will not be much larger than $T_s (y^a_k)$ and the malicious behavior will still be detected quickly by the system. 

It is worth noting that the system can not decide whether $\Sigma$ is easy to be estimated by the attacker by only checking $\|\Delta z_k (y^a_k) \|_2, k=1,\dots, T_s(\hat{\Sigma}y^a_k)$. When $\|\Delta z_k (y^a_k) \|_2$ stays in a small range for a long time and  $T_s(\hat{\Sigma}y^a_k)$ is large, the reason may be the original injection sequence $y^a_k$ also has a large time length of keeping stealthy $T_s(y^a_k)$. Define the stealthy time increasing proportion for an estimated $\hat{\Sigma}(N)$  calculated after measuring time $N$ as
\begin{align}
\alpha(N_{\Sigma})=\frac{T_s( \hat{\Sigma}(N)y^a_k)-T_s(y^a_k)}{T_s (y^a_k)},
\end{align} 
where $\hat{\Sigma}(N)$ is estimated from $N$ steps measurements of sensor and actuator values. As we will show in Section~\ref{sec:simulation}, $\alpha(N_{\Sigma})$ increases with an increasing $N_{\Sigma}$ for a fixed $\Sigma$ in general. When the attacker is able to estimate the coding matrix and inject $(\hat{\Sigma}(N)y^a_k)$ to stay stealthy for a longer time, the system needs to apply a new coding matrix before the attacker has enough measurements to estimate an $\hat{\Sigma}(N)$ that reaches the threshold $\tilde{\alpha}(N_{\Sigma})$ of the increasing time proportion $\alpha(N_{\Sigma})$. From the perspective of the system, a heuristic way to decide the time length $N_{\Sigma}$ of changing $\Sigma$ is as Algorithm~\ref{change_time}.

\begin{algorithm}
\caption{Heuristic Algorithm for choosing $N_{\Sigma}$}
\textbf{Inputs:}{Coded system's parameter $(A,B,C,K, \Sigma)$, $\chi^2$ detector threshold $\alpha$, time step $t_s$ for increasing $N_{\Sigma}$, threshold proportion $\tilde{\alpha}(N_{\Sigma})$.}\\
\textbf{Initialization:}{Initialize the value of $\hat{\Sigma}$ as an identical matrix, let $N_{\Sigma}=0$, calculate $\alpha(N_{\Sigma})$}\\
\textbf{While $\alpha(N_{\Sigma})<\tilde{\alpha}(N_{\Sigma})$}\\
1). Estimate $\hat{\Sigma}$ with $t_s$ steps of new sensor and actuator values, and update $\alpha(N_{\Sigma})$. \\
2). Let $N_{\Sigma}=N_{\Sigma}+t_s$, and save sensor and actuator values for next iteration.\\
\textbf{Return:} {Measurement time length $N_{\Sigma}$ for estimating $\Sigma$.} 
\label{change_time}
\end{algorithm}

\section{Illustrative Examples}
\label{sec:simulation}
\subsection{Coding scheme detect stealthy data injection}
We show the effects of coding sensor outputs by examples of two-dimensional LTI systems. 
Consider a detectable 2-dimensional linear system with parameters: 
\begin{center}
$\mathbf{A}=\begin{bmatrix}0.8&0\\0.5&1\end{bmatrix}, \mathbf{B}=\begin{bmatrix}1\\0.5\end{bmatrix},
\mathbf{C}=\begin{bmatrix}2&0.5\\0&1\end{bmatrix}, \mathbf{D}=0,$
\end{center}
where $A$ has an unstable eigenvalue $\lambda=1$ and eigenvector $v=[0\ 1]^T$. 
One stealth attack sequence is: 
$y_0^a =[0.0588\ 0.0588 ]^T$, $y^a_1=[0.1286\  -0.9706]^T$,  
$y_k=y_{k-2}^a-y_0^a, k \geq 2$.
Multiple solutions of feasible coding matrices that satisfy Theorem~\ref{code} exist in general. For instance, for the above system, $\Sigma_1=\begin{bmatrix}2&-0.5\\-0.5&1\end{bmatrix}$ and $\Sigma_2=\begin{bmatrix}1&-1\\2&0\end{bmatrix}$ are both feasible. 

Figure~\ref{residue_1} shows the comparison result of $\|\Delta z_k\|_2, \|\Delta z'_k\|_2$ when there is injection attacks for the original and coded systems, and $\|\Delta z'_k\|_2$ increases with time $k$ after coded by $\Sigma_1$, while without coding $\|\Delta z_k\|_2$ is bounded. 
\begin{figure}[b!]
\vspace{-5pt}
\centering
\includegraphics [width=0.38\textwidth]{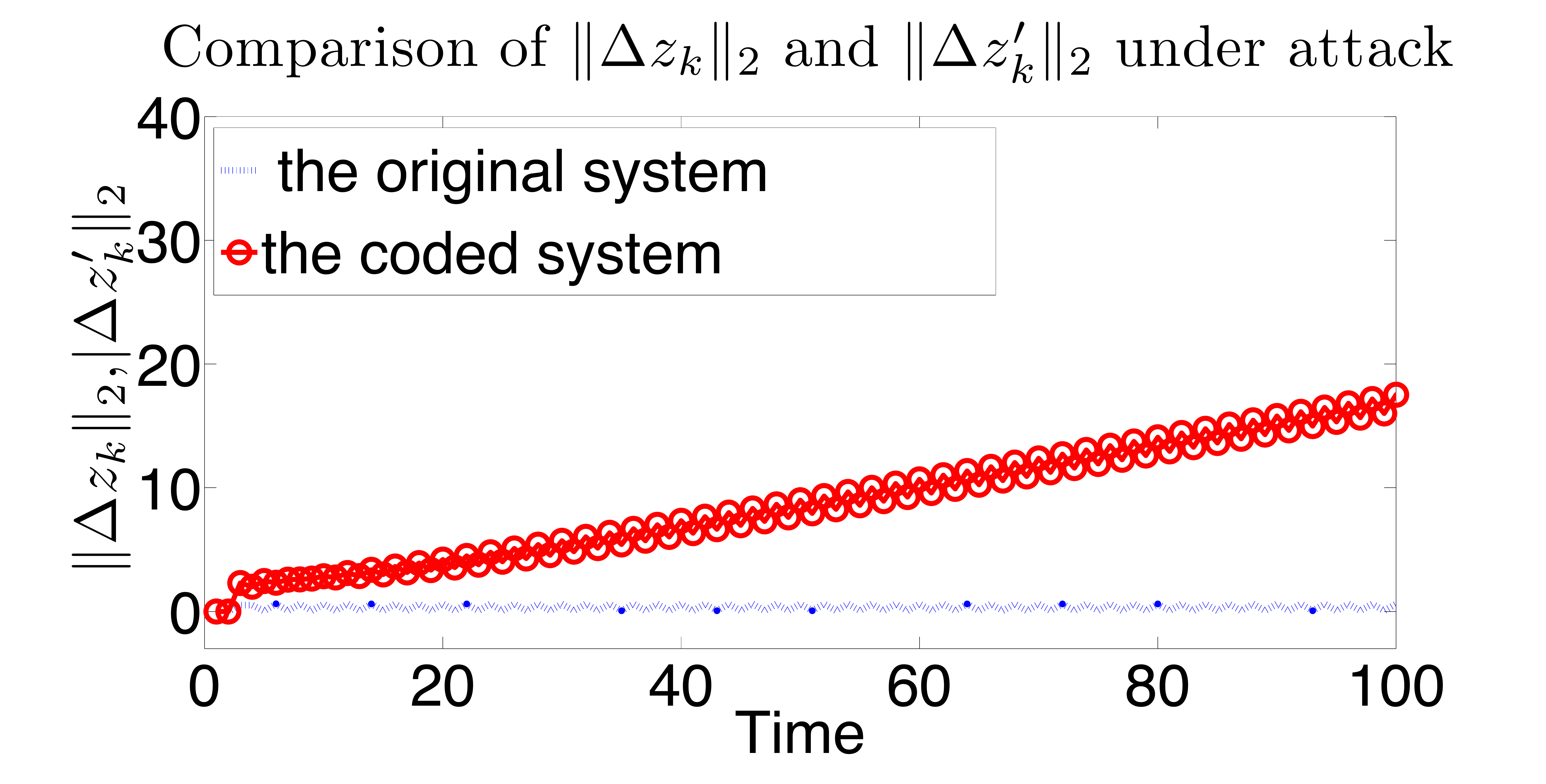}
\vspace{-10pt}
\caption{Comparison of $\|\Delta z_k\|_2$ of the original system and $\|\Delta z'_k\|_2$ of the coded system with $\Sigma_1$.}
\label{residue_1}
\end{figure}
\begin{figure}[t!]
\centering
\includegraphics [width=0.38\textwidth]{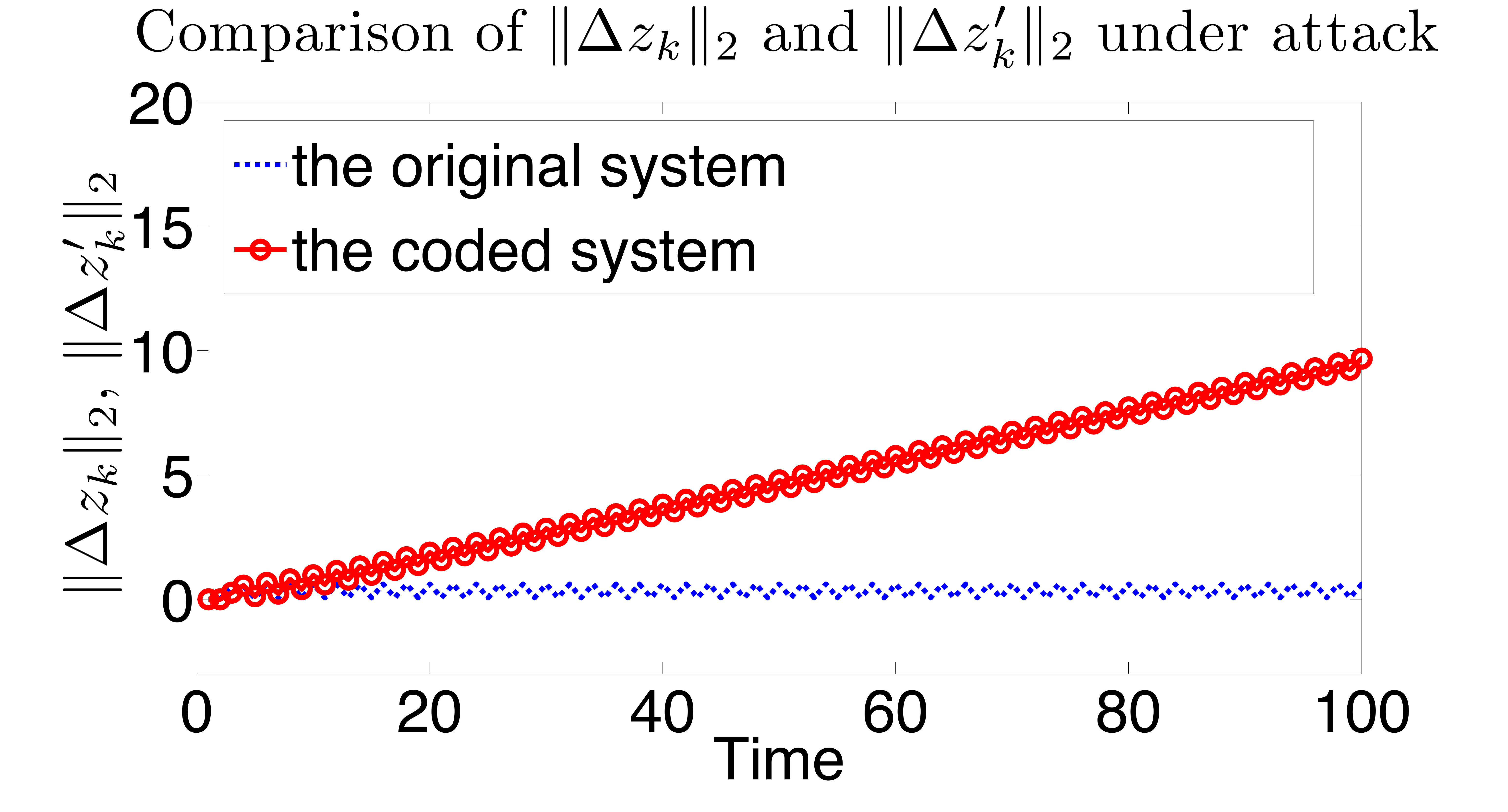}
\vspace{-15pt}
\caption{Comparison of norm of residue change between the original system and coded system, $\Delta z_k$ and $\Delta z'_k$, for $\Sigma_2$ that satisfies Theorem~\ref{code}.}
\label{deltaz_1}
\end{figure}
\begin{figure}[t!]
\centering
\includegraphics [width=0.40\textwidth]{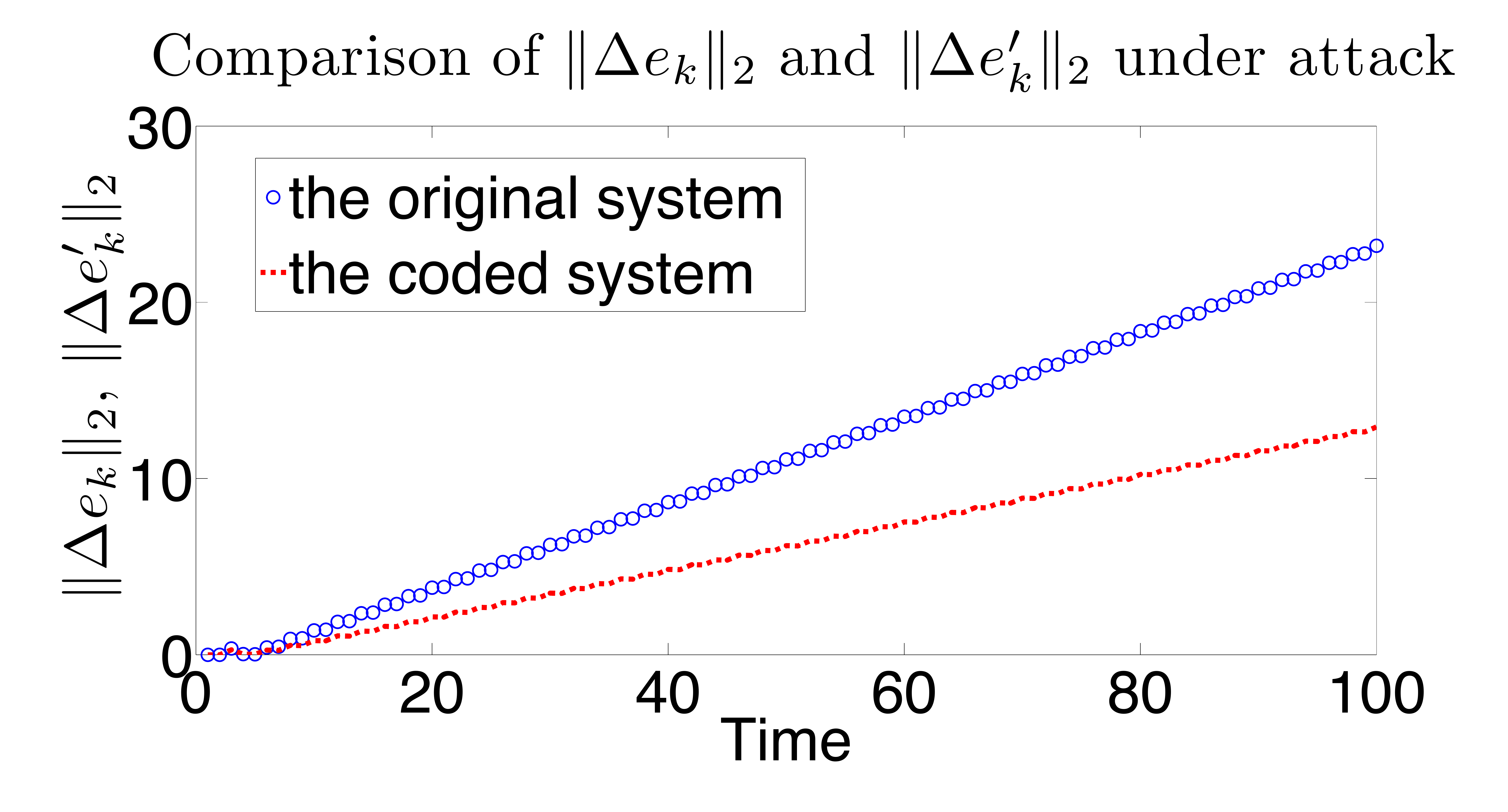}
\vspace{-15pt}
\caption{Comparison of norm of estimation error change between the original system and coded system, 
$\Delta e_k$ and $\Delta e'_k$, for $\Sigma_2$ that satisfies Theorem~\ref{code}.} 
\label{deltae_1}
\end{figure}
Figure~\ref{deltaz_1} shows that for the sensor outputs transformed by $\Sigma_2$, $\Delta z'_k$ increases with time $k$, while the original system $\Delta z_k$ stays inside a bounded range. For the transformed sensor outputs, the change of the estimation error $\Delta e'_k$ increases even slower than $\Delta e_k$ under data injection attack as shown in Figure~\ref{deltae_1}. By comparing the change of estimation error $\Delta e_k$ and $\Delta e'_k$, we show that estimation error of a coded system does not necessarily increase faster than the original system.

\subsection{When the attacker tries to estimate the coding matrix}
In this example, the system applies the coding matrix designed based on Algorithm~\ref{calculate_g}, a scaled rotation matrix $2*G(1,2, \frac{\pi}{4})$ with a rotation radian $\theta=\frac{\pi}{4}$ in the $(1,2)$ plane 
$\Sigma=\begin{bmatrix}0.7 & 0.5 \\-0.5 & 0.7 \end{bmatrix}.$
When the attacker estimates $\Sigma$ according to $N=20$ steps of sensor and actuator measurements via Algorithm~\ref{attacker_alg}, the estimated result is $\hat{\Sigma}$ and the attacker designs a new injection sequence $\hat{\Sigma} y^a_k$ based on $\hat{\Sigma}$
\begin{align*}
\hat{\Sigma}=\begin{bmatrix}2.80&-0.15\\-0.89&0.05 \end{bmatrix}.
\end{align*}
In Figure~\ref{delta_zs}, we compare the residue change for: the original system under injection sequence $y^a_k$, the coded system under data injection $y^a_k$, and the coded system under injection sequence $\hat{\Sigma} y^a_k$. Assume the threshold for $\|\Delta z_k\|_2$ is set as $M=2$, in Figure~\ref{delta_zs} we can see that the attack will be detected after injecting a sequence of data $y^a_k$ (designed for the original system) for $12$ seconds to the coded system, i.e., $T_s(y^a_k)=12$. In contrast, $T_s(\hat{\Sigma}y^a_k)=50$ seconds, however, $\hat{\Sigma}$ is estimated via $N=20$ seconds of measurements of sensor and actuator values. Hence, the attacker does not have enough time to get such $\hat{\Sigma}$ before being detected.
\begin{figure}[t!]
\vspace{-8pt}
\centering
\includegraphics [width=0.38\textwidth]{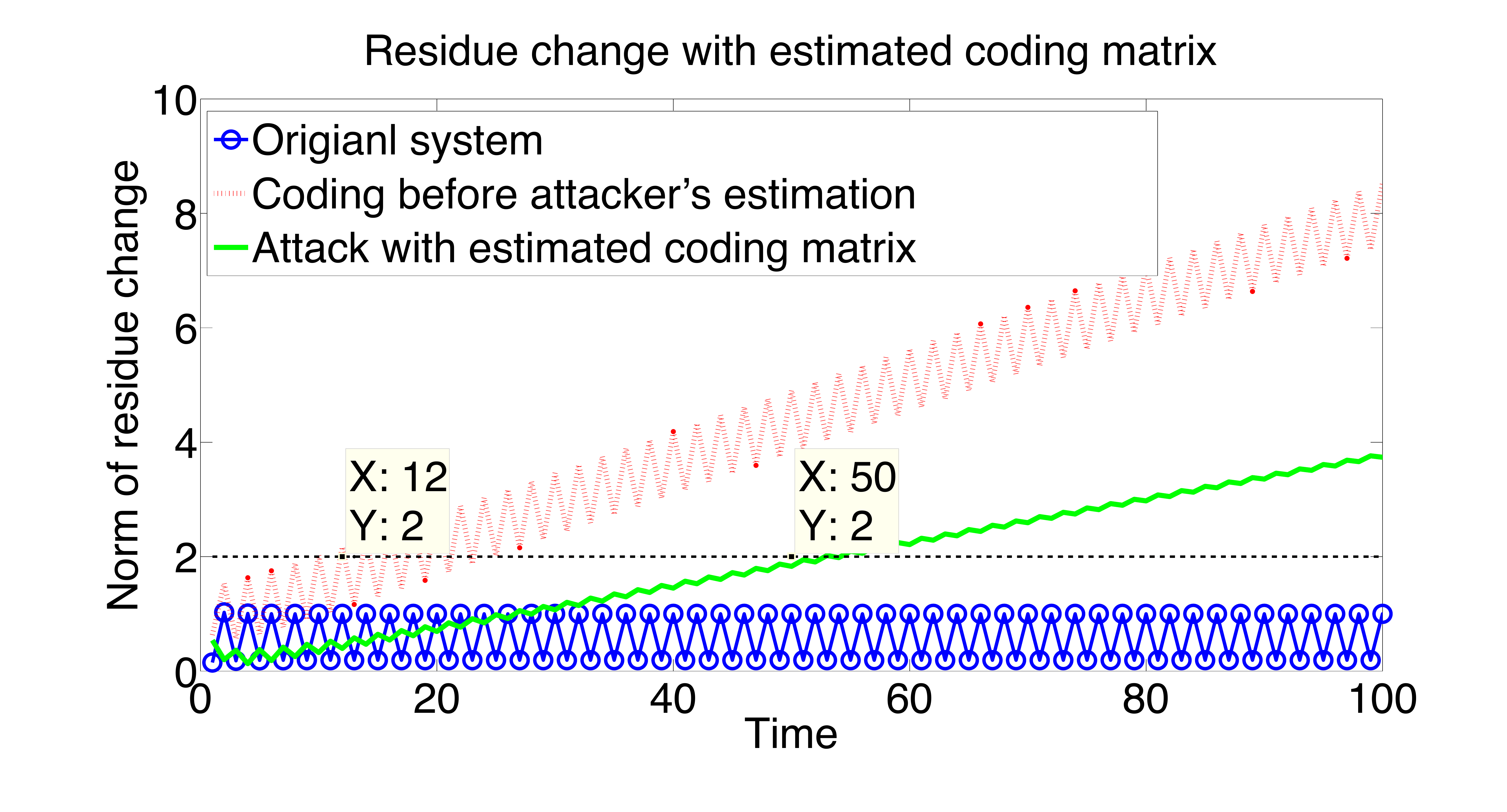}
\vspace{-15pt}
\caption{Comparison of norm of estimation residue change between the original system---$\Delta z_k$,  coded system---$\Delta z'_k$ when the system applies $\Sigma$ and the attacker injects $y^a_k$ designed for the original system, and the sensor data injection sequence designed with estimated coding matrix--- $\Delta z_{sk}$ when the attacker injects $\hat{\Sigma} y^a_k$ and the true coding matrix is $\Sigma$. When $M=2$, $T_s(y^a_k)=12$ seconds, $T_s(\hat{\Sigma}y^a_k)=50$ seconds.} 
\label{delta_zs}
\end{figure}

\subsection{Number of measurements to estimate the coding matrix}
Figure~\ref{normzs_N} shows how the estimation of $\hat{\Sigma}$ changes with the number measurement steps $N$. In general, when $N$ increases, the difference between $\hat{\Sigma}$ and $\Sigma$ decreases, and the norm of residue change $\|\Delta z_k\|_2$ increases slower with sensor injection sequence $\hat{\Sigma}y^a_k$. However, as shown in Figure~\ref{normzs_N}, for both $N=25$ and $N=200$, $\|\Delta z_k\|_2$ are almost the same, hence, the attacker does not infer a better coding matrix to keep stealthy with a greater measurement time. Comparing the time of keeping stealthy with estimated coding matrix, we have $T_s(\hat{\Sigma}y^a_k)=20$ for $N=2$, $T_s(\hat{\Sigma}y^a_k)=30$ for $N=5$, and approximately $T_s(\hat{\Sigma}y^a_k)=51$ for $N\geqslant 25$. From the perspective of the system, if we set the threshold $\tilde{\alpha} (N_{\Sigma})=1.5$ in this case, and $\frac{T_s( \hat{\Sigma}(N)y^a_k)-T_s(y^a_k)}{T_s (y^a_k)}=\frac{30-12}{12}=1.5$, by the heuristic Algorithm~\ref{change_time}, the system can change the coding matrix every $5$ seconds.

\begin{figure}[t!]
\vspace{-8pt}
\centering
\includegraphics [width=0.38\textwidth]{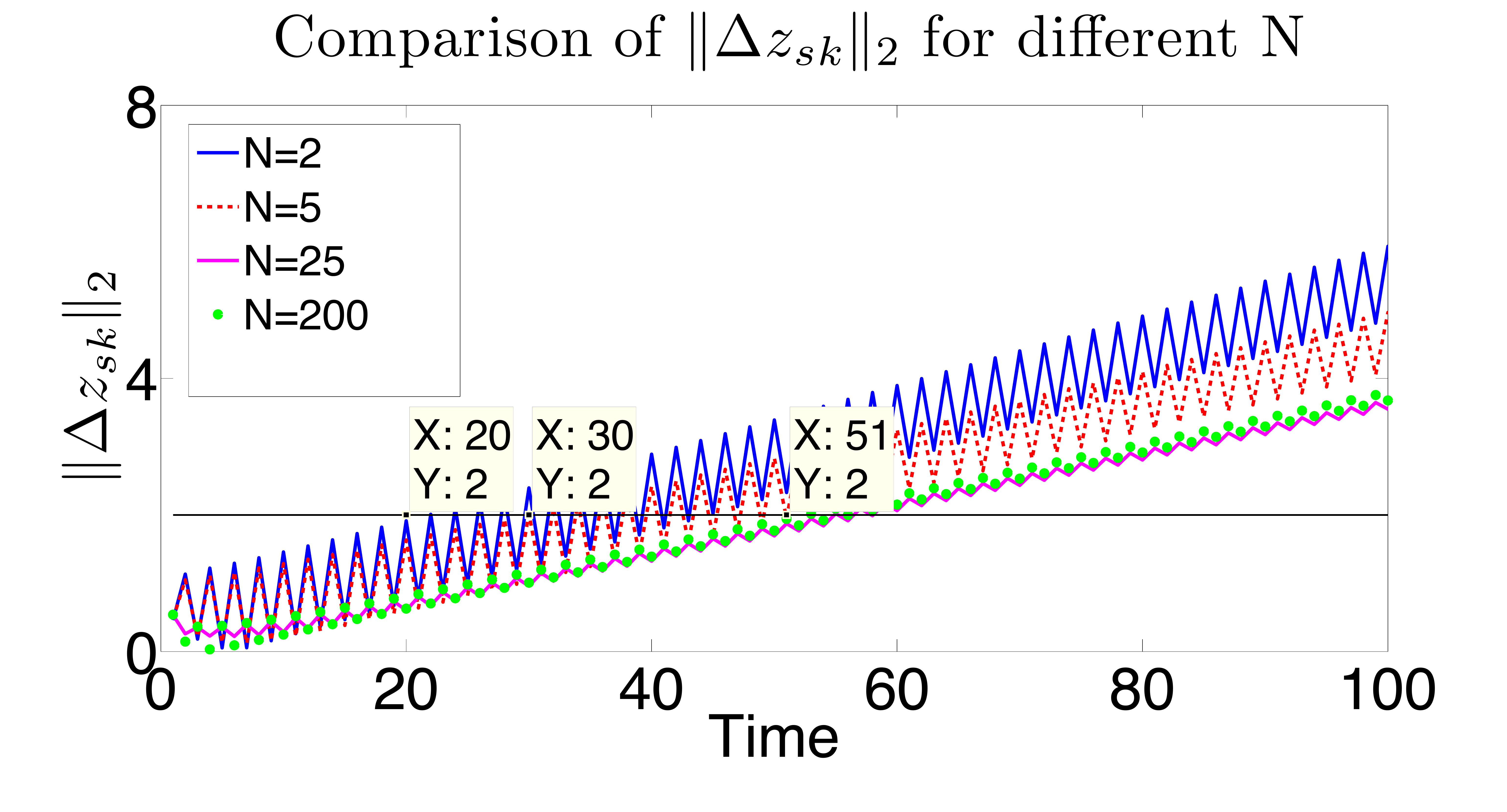}
\vspace{-15pt}
\caption{Comparison of norm of estimation residue change when the attacker designs a sensor injection sequence according to  $\hat{\Sigma}$ estimated with different measurement number $N$. When attacker injects $\hat{\Sigma}y_k^a$ and the system applies coding, the detection time, i.e., the time $\|\Delta z_{sk}\|_2 \geqslant 2$ is labeled for different measurement time $N$: $N=2,T_s(\hat{\Sigma}y^a_k)=20$; $N=5, T_s(\hat{\Sigma}y^a_k)=30$. For $N=25$ and $N=200$, $T_s(\hat{\Sigma}y^a_k)$ are almost the same value: $51$.}
\label{normzs_N}
\vspace{-10pt}
\end{figure}

\section{Conclusion}
\label{sec:conclusion}
In this work, we have proposed a method of coding sensor outputs to detect stealthy data injection attacks that designed by an intelligent attacker with system model knowledge. We show the conditions of a feasible coding scheme to detect a stealthy injection sequence with statistical detectors, and develop an efficient algorithm to compute such feasible coding  matrices. The sensor coding scheme is valid for the scenarios where the attacker is capable to estimate the coding matrix via measuring sensor outputs and actuator inputs. Simulation examples show that the adaptive injection sequence designed based on an  estimated coding matrix cannot pass the detector without knowledge of the coding matrix applied by the system in general. In the future, we will explore a coding scheme for a system with structural constraints.

\bibliographystyle{IEEEtran}
{  \small 
\bibliography{SC}

\begin{thebibliography}{10}
\providecommand{\url}[1]{#1}
\csname url@samestyle\endcsname
\providecommand{\newblock}{\relax}
\providecommand{\bibinfo}[2]{#2}
\providecommand{\BIBentrySTDinterwordspacing}{\spaceskip=0pt\relax}
\providecommand{\BIBentryALTinterwordstretchfactor}{4}
\providecommand{\BIBentryALTinterwordspacing}{\spaceskip=\fontdimen2\font plus
\BIBentryALTinterwordstretchfactor\fontdimen3\font minus
  \fontdimen4\font\relax}
\providecommand{\BIBforeignlanguage}[2]{{%
\expandafter\ifx\csname l@#1\endcsname\relax
\typeout{** WARNING: IEEEtran.bst: No hyphenation pattern has been}%
\typeout{** loaded for the language `#1'. Using the pattern for}%
\typeout{** the default language instead.}%
\else
\language=\csname l@#1\endcsname
\fi
#2}}
\providecommand{\BIBdecl}{\relax}
\BIBdecl

\bibitem{code_cdc14}
F.~Miao, Q.~Zhu, M.~Pajic, and G.~Pappas, ``Coding sensor outputs for injection
  attacks detection,'' in \emph{IEEE 53rd Annual Conference on Decision and
  Control (CDC)}, 2014, pp. 5776--5781.

\bibitem{cps}
K.-D. Kim and P.~R. Kumar, ``Cyber-physical systems: A perspective at the
  centennial,'' \emph{Proceedings of the IEEE}, pp. 1287--1308, 2012.

\bibitem{secure_control}
A.~Cardenas, S.~Amin, and S.~Sastry, ``Secure control: Towards survivable
  cyber-physical systems,'' in \emph{28th International Conference on
  Distributed Computing Systems Workshops}, 2008, pp. 495--500.

\bibitem{secure_challenge}
A.~Cardenas, S.~Amin, B.~Sinopoli, A.~Giani, A.~Perrig, and S.~Sastry,
  ``Challenges for securing cyber physical systems,'' in \emph{Workshop on
  future directions in cyber-physical systems security}, 2009.

\bibitem{sa_protection}
G.~D{\'a}n and H.~Sandberg, ``Stealth attacks and protection schemes for state
  estimators in power systems,'' in \emph{2010 First IEEE International
  Conference on Smart Grid Communications (SmartGridComm)}, 2010, pp. 214--219.

\bibitem{cs_se}
A.~Teixeira, S.~Amin, H.~Sandberg, K.~Johansson, and S.~Sastry, ``Cyber
  security analysis of state estimators in electric power systems,'' in
  \emph{2010 49th IEEE Conference on Decision and Control (CDC)}, 2010, pp.
  5991--5998.

\bibitem{survey_fault}
I.~Hwang, S.~Kim, Y.~Kim, and C.~Seah, ``A survey of fault detection,
  isolation, and reconfiguration methods,'' \emph{Control Systems Technology,
  IEEE Transactions on}, vol.~18, no.~3, pp. 636--653, 2010.

\bibitem{limit_activedetection}
F.~{Pasqualetti}, F.~{D{\"o}rfler}, and F.~{Bullo}, ``Attack detection and
  identification in cyber-physical systems,'' \emph{IEEE Transactions on
  Automatic Control}, vol.~58, no.~11, pp. 2715--2729, Nov 2013.

\bibitem{est-control}
H.~Fawzi, P.~Tabuada, and S.~Diggavi, ``Secure estimation and control for
  cyber-physical systems under adversarial attacks,'' \emph{IEEE Transactions
  on Automatic Control}, vol.~59, no.~6, pp. 1454--1467, 2014.

\bibitem{arse}
M.~Pajic, J.~Weimer, N.~Bezzo, P.~Tabuada, O.~Sokolsky, I.~Lee, and G.~Pappas,
  ``Robustness of attack-resilient state estimators,'' in \emph{2014 ACM/IEEE
  International Conference on Cyber-Physical Systems (ICCPS)}, 2014, pp.
  163--174.

\bibitem{Dey_estcode}
S.~Dey, A.~Chiuso, and L.~Schenato, ``Remote estimation with noisy measurements
  subject to packet loss and quantization noise,'' \emph{IEEE Transactions on
  Control of Network Systems}, vol.~1, no.~3, pp. 204--217, Sept 2014.

\bibitem{Reppa_fd}
V.~Reppa, M.~Polycarpou, and C.~Panayiotou, ``Distributed sensor fault
  diagnosis for a network of interconnected cyberphysical systems,'' \emph{IEEE
  Transactions on Control of Network Systems}, vol.~2, no.~1, pp. 11--23, March
  2015.

\bibitem{fdi_PowerGrids}
Y.~Liu, P.~Ning, and M.~K. Reiter, ``False data injection attacks against state
  estimation in electric power grids,'' in \emph{Proceedings of the 16th ACM
  Conference on Computer and Communications Security}.\hskip 1em plus 0.5em
  minus 0.4em\relax ACM, 2009, pp. 21--32.

\bibitem{Nay_sd}
A.~Nayyar and D.~Teneketzis, ``Signaling in sensor networks for sequential
  detection,'' \emph{IEEE Transactions on Control of Network Systems}, vol.~2,
  no.~1, pp. 36--46, March 2015.

\bibitem{game_replay}
F.~Miao, M.~Pajic, and G.~Pappas, ``Stochastic game approach for replay attack
  detection,'' in \emph{IEEE 52nd Annual Conference on Decision and Control
  (CDC)}, Dec 2013, pp. 1854--1859.

\bibitem{Miao_game}
F.~Miao and Q.~Zhu, ``A moving-horizon hybrid stochastic game for secure
  control of cyber-physical systems,'' in \emph{IEEE 53rd Annual Conference on
  Decision and Control (CDC)}, Dec 2014, pp. 517--522.

\bibitem{false_injection}
Y.~Mo and B.~Sinopoli, ``False data injection attacks in control systems,'' in
  \emph{First Workshop on Secure Control Systems, CPS Week}, 2010.

\bibitem{accstealth}
W.~C.Kwon and I.~Hwang, ``Security analysis for cyber-physical systems against
  stealthy deception attacks,'' in \emph{Amerian Control Conference (ACC)},
  June 2013.

\bibitem{Mana_kffd}
K.~Manandhar, X.~Cao, F.~Hu, and Y.~Liu, ``Detection of faults and attacks
  including false data injection attack in smart grid using kalman filter,''
  \emph{IEEE Transactions on Control of Network Systems}, vol.~1, no.~4, pp.
  370--379, Dec 2014.

\bibitem{encrypt_sensor}
P.~Ganesan, R.~Venugopalan, P.~Peddabachagari, A.~Dean, F.~Mueller, and
  M.~Sichitiu, ``Analyzing and modeling encryption overhead for sensor network
  nodes,'' in \emph{2nd ACM International Conference on Wireless Sensor
  Networks and Applications}, 2003, pp. 151--159.

\bibitem{correct_code1977}
H.~Imai and S.~Hirakawa, ``A new multilevel coding method using
  error-correcting codes,'' \emph{Information Theory, IEEE Transactions on},
  vol.~23, no.~3, pp. 371--377, May 1977.

\bibitem{encrypt_key}
U.~M. Maurer, ``Secret key agreement by public discussion from common
  information,'' \emph{IEEE Transactions on Information Theory}, vol.~39,
  no.~3, pp. 733--742, May 1993.

\bibitem{encrypt_sn}
H.~Chan and A.~Perrig, ``Security and privacy in sensor networks,''
  \emph{Computer}, vol.~36, no.~10, pp. 103--105, Oct 2003.

\bibitem{replay}
Y.~Mo and B.~Sinopoli, ``Secure control against replay attacks,'' in
  \emph{Communication, Contro, and Computing, 47th Annual Allerton Conference
  on}, 2009, pp. 911--918.

\bibitem{fd_continuous}
M.~Zhong, S.~X. Ding, J.~Lam, and H.~Wang, ``An {LMI} approach to design robust
  fault detection filter for uncertain {LTI} systems,'' \emph{Automatica},
  vol.~39, no.~3, pp. 543 -- 550, 2003.

\bibitem{foundation_encrypt}
O.~Goldreich, \emph{Foundations of Cryptography: Volume 2, Basic
  Applications}.\hskip 1em plus 0.5em minus 0.4em\relax New York, NY, USA:
  Cambridge University Press, 2004.

\bibitem{key_protocol}
R.~C. Merkle, ``Protocols for public key cryptosystems,'' in \emph{Security and
  Privacy, 1980 IEEE Symposium on}, April 1980, pp. 122--122.

\bibitem{bilinear}
S.~Cohen and C.~Tomasi, ``Systems of bilinear equations,'' Computer Science
  Department, Stanford University, Tech. Rep. CS-TR-97-1588, Tech. Rep., 1997.

\end{thebibliography}
}

\begin{IEEEbiography}[{\includegraphics[width=1in,height=1.25in,clip,keepaspectratio]{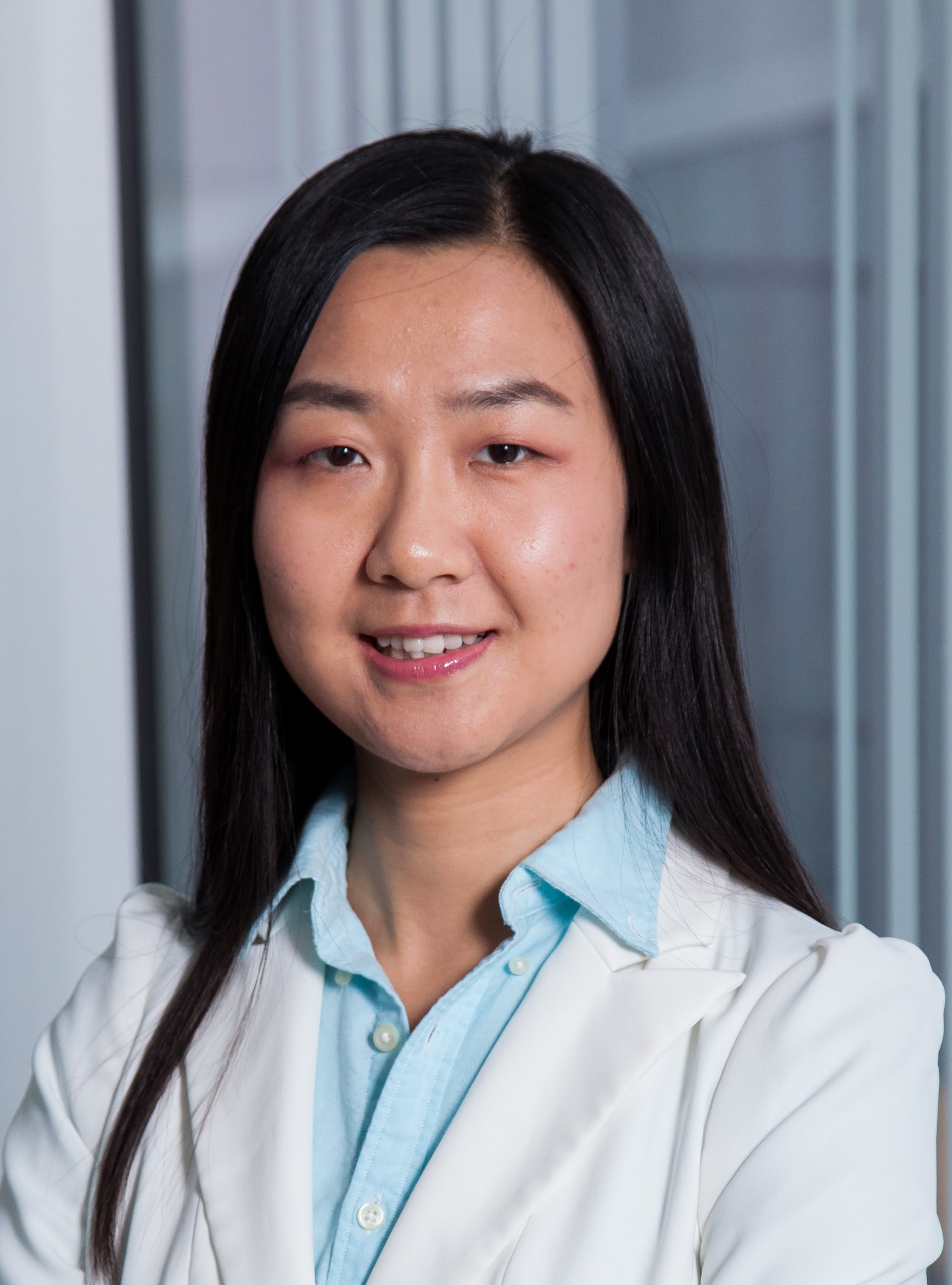}}]{Fei Miao} (S'13) received the B.Sc. degree in Automation from Shanghai Jiao Tong University, Shanghai, China in 2010. Currently, she is working toward the Ph.D. degree in the Department of Electrical and Systems Engineering at University of Pennsylvania. Her research interests include data-driven real-time control frameworks of large-scale interconnected cyber-physical systems under model uncertainties, and resilient control frameworks to address security issues of cyber-physical systems. She was a Best Paper Award Finalist at the 6th ACM/IEEE International Conference on Cyber-Physical Systems in 2015.
\end{IEEEbiography}

\begin{IEEEbiography}[{\includegraphics[width=1in,height=1.25in,clip,keepaspectratio]{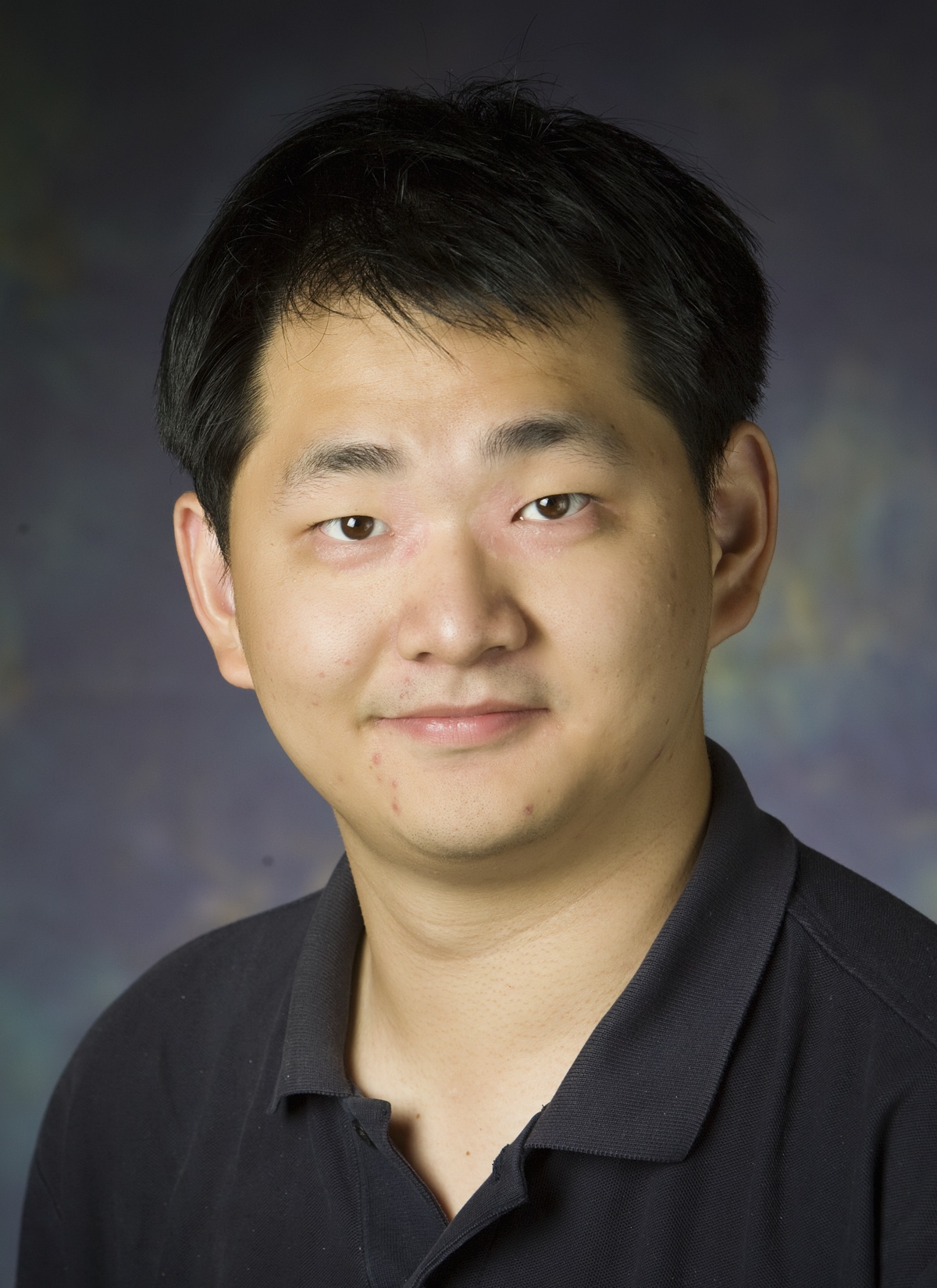}}]{Quanyan Zhu} (S'04-M'12) is an assistant professor in the Department of Electrical and Computer Engineering at New York University. He received the B. Eng. in Honors Electrical Engineering with distinction from McGill University in 2006, the M.A.Sc. from University of Toronto in 2008, and the Ph.D. from the University of Illinois at Urbana-Champaign (UIUC) in 2013. From 2013-2014, he was a postdoctoral research associate at the Department of Electrical Engineering, Princeton University. He is a recipient of many awards including NSERC Canada Graduate Scholarship (CGS), Mavis Future Faculty Fellowships, and NSERC Postdoctoral Fellowship (PDF). He spearheaded and chaired INFOCOM Workshop on Communications and Control on Smart Energy Systems (CCSES), Midwest Workshop on Control and Game Theory (WCGT), and 7th Game and Decision Theory for Cyber Security (GameSec). His current research interests include resilient and secure interdependent critical infrastructures, energy systems, cyber-physical systems, and smart cities.
\end{IEEEbiography}

\begin{IEEEbiography}[{\includegraphics[width=1in,height=1.25in,clip,keepaspectratio]{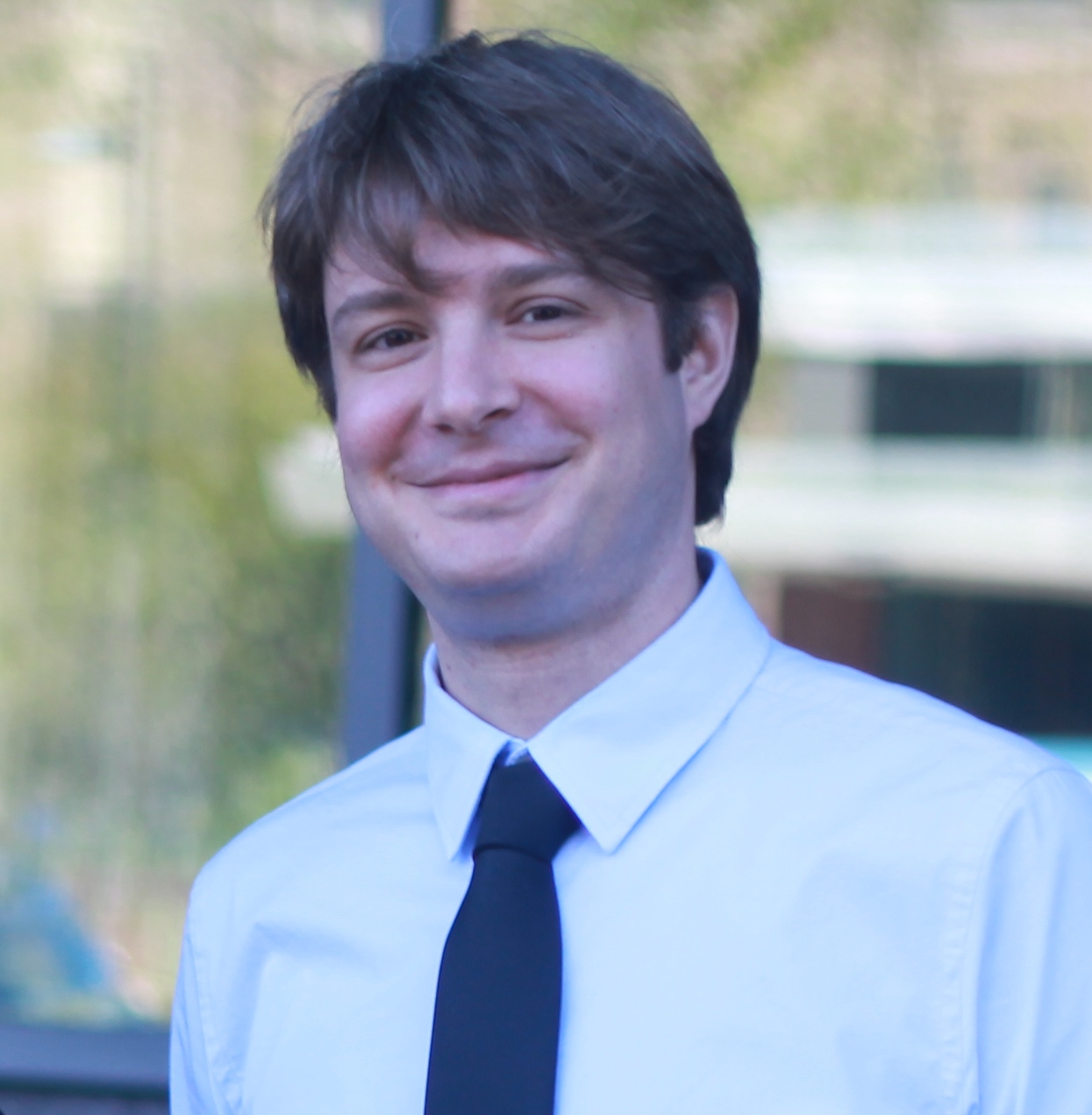}}]{Miroslav Pajic} (S’06-M’13) received the Dipl. Ing. and M.S. degrees in electrical engineering from the University of Belgrade, Serbia, in 2003 and 2007, respectively, and the M.S. and Ph.D. degrees in electrical engineering from the University of Pennsylvania, Philadelphia, in 2010 and 2012, respectively. He is currently an Assistant Professor in the Department of Electrical and Computer Engineering at Duke University. He also holds a secondary appointment in the Computer Science Department. Prior to joining Duke, Dr. Pajic was a Postdoctoral Researcher in the PRECISE Center, University of Pennsylvania, from 2012-2015. His research interests focus on the design and analysis of cyber-physical systems and in particular real-time and embedded systems, distributed/networked control systems, and high-confidence medical devices and systems. Dr. Pajic received various awards including the 2011 ACM SIGBED Frank Anger Memorial Award, the Joseph and Rosaline Wolf Award for Best Electrical and Systems Engineering Dissertation from Penn, the Best Paper Award at the 2014 ACM/IEEE International Conference on Cyber-Physical Systems (ICCPS), and the Best Student Paper award at the 2012 IEEE Real-Time and Embedded Technology and Applications Symposium (RTAS).
\end{IEEEbiography}

\begin{IEEEbiography}[{\includegraphics[width=1in,height=1.25in,clip,keepaspectratio]{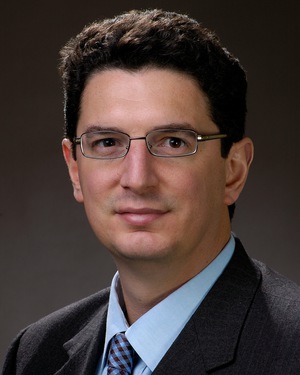}}]{George J. Pappas} (S'90-M'91-SM'04-F'09) received the Ph.D. degree in electrical engineering and computer sciences from the University of California, Berkeley, CA, USA, in 1998. He is currently the Joseph Moore Professor and Chair of the Department of Electrical and Systems Engineering, University of Pennsylvania, Philadelphia, PA, USA. He also holds a secondary appointment with the Department of Computer and Information Sciences and the Department of Mechanical Engineering and Applied Mechanics. He is a Member of the GRASP Lab and the PRECISE Center. He had previously served as the Deputy Dean for Research with the School of Engineering and Applied Science. His research interests include control theory and, in particular, hybrid systems, embedded systems, cyber-physical systems, and hierarchical and distributed control systems, with applications to unmanned aerial vehicles, distributed robotics, green buildings, and bimolecular networks. Dr. Pappas has received various awards, such as the Antonio Ruberti Young Researcher Prize, the George S. Axelby Award, the Hugo Schuck Best Paper Award, the George H. Heilmeier Award, the National Science Foundation PECASE award and numerous best student papers awards at ACC, CDC, and ICCPS.
\end{IEEEbiography}

\end{document}